\newcounter{thm}
\newtheorem{prob}[thm]{Problem}
\newtheorem{lemma}{Lemma}[section]
\newtheorem{theorem}{Theorem}[section]
\newtheorem{definition}{Definition}[section]
\newtheorem{remark}{Remark}[section]
\newtheorem{approximation}{Approximation}[section]
\newcommand{\flexbrac}[1]{\if\relax\detokenize{#1}\relax \else (#1) \fi}
\newcommand{\flexcomma}[1]{\if\relax\detokenize{#1}\relax \else ,#1 \fi}
\newcommand{\abs}[1]{\lvert #1 \rvert}
\newcommand{\cA}{\mathcal{A}}
\newcommand{\cC}{\mathcal{C}}
\newcommand{\cG}{\mathcal{G}}
\newcommand{\cM}{\mathcal{M}}
\newcommand{\cR}{\mathcal{R}}
\newcommand{\cV}{\mathcal{V}}
\DeclareMathOperator*{\argmin}{arg\,min}
\def\BibTeX{{\rm B\kern-.05em{\sc i\kern-.025em b}\kern-.08em
   T\kern-.1667em\lower.7ex\hbox{E}\kern-.125emX}}
\newif\ifmargincomments 
\begin{document}
\title{A Time-invariant Network Flow Model for Ride-pooling in Mobility-on-Demand Systems}

\author{Fabio Paparella, Leonardo Pedroso, Theo Hofman, Mauro Salazar
\thanks{Date of submission \today. This publication is part of the
project NEON with number 17628 of the research
program Crossover, partly financed by the Dutch
Research Council.}
\thanks{ Fabio Paparella, Leonardo Pedroso, Theo Hofman and Mauro Salazar are with the Control Systems Technology section, Department of Mechanical Engineering, Eindhoven University of Technology (TU/e), Eindhoven, 5600 MB, The Netherlands, (e-mail:  \tt\footnotesize \{f$\!$.$\!$paparella,l$\!$.$\!$pedroso,t$\!$.$\!$hofman,m$\!$.$\!$r$\!$.$\!$u$\!$.$\!$salazar\}@tue.nl)$\!$.}}

\maketitle
\thispagestyle{plain}
\pagestyle{plain}

\begin{abstract}
This paper presents a framework to incorporate ride-pooling from a  mesoscopic point of view, within time-invariant network flow models of Mobility-on-Demand systems. 
The resulting problem structure remains identical to a standard network flow model, 
a linear problem, which can be solved in polynomial time for a given ride-pooling request assignment.
In order to compute such a ride-pooling assignment, we devise a polynomial-time knapsack-like algorithm that is optimal w.r.t.\ the minimum user travel time instance of the original problem.
Finally, we conduct two case studies of Sioux Falls and Manhattan, where we validate our models against state-of-the-art time-varying results, and we quantitatively highlight the effects that maximum waiting time and maximum delay thresholds have on the vehicle hours traveled, overall pooled rides and actual delay experienced.
We show that for a sufficient number of requests, with a maximum waiting time and delay of 5 minutes, it is possible to ride-pool more than 80\% of the requests for both case studies. Last, allowing for four people ride-pooling can significantly boost the performance of the system.
\end{abstract}

\begin{IEEEkeywords}
Fleet Design, Mobility-as-a-Service, Mobility-on-Demand, Ride-pooling, Smart Mobility.
\end{IEEEkeywords}


\section{Introduction}\label{sec:intro}

\begin{figure}[t]
	\centering
	\begin{subfigure}{\linewidth}
		\centering
		\includegraphics[width=.8\linewidth]{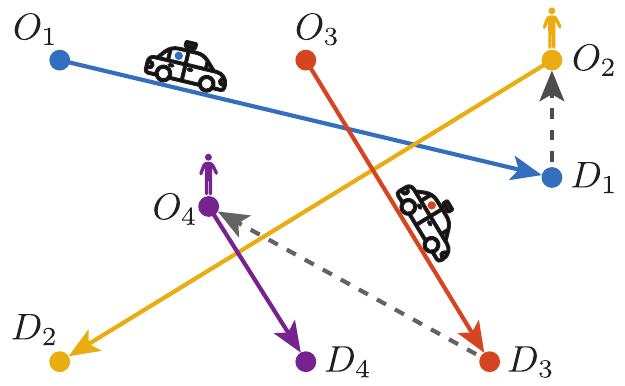}
		\caption{Ride-sharing.}
	\end{subfigure}\\
	\vspace{0.5cm}
	\begin{subfigure}{\linewidth}
		\centering
		\includegraphics[width=.8\linewidth]{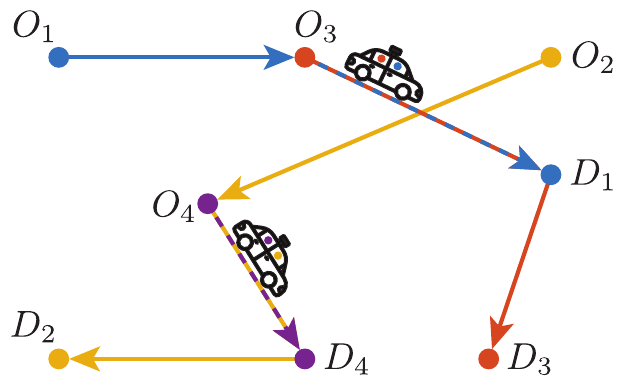}
		\caption{Ride-pooling.}
	\end{subfigure}
	\caption{Illustrative comparison of ride-sharing and ride-pooling with two vehicles; gray dashed lines represent rebalancing flows without users aboard. }
	\label{fig:eg_ridepooling}
\end{figure}

Mobility-on-Demand is transforming urban transportation.
While, ride-sharing is the concept of having only one user at the time on each vehicle, ride-pooling is the feature of allowing two or more users traveling contemporaneously on each vehicle at lower costs, e.g., emissions, energy consumption, fleet size, and price. Nevertheless, these improvements are paid in terms of additional waiting time and delays caused by detours. An illustrative comparison between ride-pooling and ride-sharing is depicted in Fig.~\ref{fig:eg_ridepooling}. Hence one of the main challenges is to understand how to balance these positive and negative aspects to achieve a solution that works for all the stakeholders. 
Ride-pooling is a difficult problem to deal with due to its inherently combinatorial nature that scales with the number of travel requests. Microscopic problems like ride-pooling are incompatible with problems on a different scale. However, sometimes it is enough to study a ride-sharing system from a mesoscopic point of view---whose complexity is usually independent from the number of travel requests---, such as when studying mobility for planning or design~\cite{SalazarLanzettiEtAl2019,LukeSalazarEtAl2021}. In this paper, we devise a framework to study ride-pooling from a mesoscopic perspective leveraging a stochastic approach. Such framework allows to incorporate ride-pooling into a time-invariant multi-commodity network flow model, also known as traffic flow model, a mesoscopic modeling framework commonly used for a variety of mobility planning and design scopes~\cite{ZardiniLanzettiEtAl2022} and whose complexity is independent from the number of travel requests. Then, leveraging flow decomposition algorithms \cite{AhujaMagnantiEtAl1993}, microscopic vehicle routes can be reconstructed.


\textit{Related Literature:} 
Multi-commodity network flow models are commonly used to characterize and control ride-sharing systems, and are suited for practical implementation of a variety of constraints of different nature and can be efficiently solved with commercial solvers. This model has been used for multiple design purposes such as minimizing fleet size \cite{PavoneSmithEtAl2012,Rossi2018}, minimizing electricity costs~\cite{RossiIglesiasEtAl2018b}, smart charging~\cite{TuranTuckerEtAl2019}, and joint optimization with public transport~\cite{SalazarLanzettiEtAl2019} and the power grid~\cite{SpieserTreleavenEtAl2014,IglesiasRossiEtAl2018,ZhangPavone2018,EstandiaSchifferEtAl2021}. For example,  in \cite{LukeSalazarEtAl2021,PaparellaChauhanEtAl2023} the authors proposed a joint optimization framework for the siting and sizing of the charging infrastructure for an electric ride-sharing system. Yet all these models deal with ride-sharing systems without the possibility of ride-pooling. 

The literature in ride-pooling is very rich. 
Alonso-Mora et al.~\cite{Alonso_Mora_2017} developed the vehicle group assignment algorithm, which can solve in an optimal way the ride-pooling problem with high capacity vehicles in a microscopic setting.
In~\cite{Santi2014,JintaoHaiEtAl2020} the authors analyzed the benefits of pooling, and the pricing and equilibrium in on-demand ride-pooling markets, respectively.
Fieldbaum et al.~\cite{FielbaumBaiEtAl2021} studied ride-pooling taking into account that users can walk before and after being picked-up/dropped-off, while in~\cite{FielbaumKucharskiEtAl2022} they examined how to split costs between users that pool together. In~\cite{TsaoMilojevicEtAl2019} a time-variant network flow model was devised to compute the optimal routing for ride-pooling. However, in all of these papers the ride-pooling problem is studied from a microscopic perspective. Each request is considered individually, and as a consequence, the complexity of the resulting problems scales w.r.t.\ the number of travel requests considered.

In conclusion, to the best of the authors' knowledge, a mesoscopic time invariant  ride-pooling network flow model has not yet been proposed.

\textit{Statement of Contributions:} The main contributions of this paper are threefold. First, we propose a framework to capture ride-pooling, a microscopic combinatorial phenomenon, in a time-invariant network flow model. Crucially, this is a mesoscopic approach whereby the complexity of the problem is independent of the number of travel requests and that can be implemented in already existing time-invariant network flow models.
Second, within the proposed framework, we devise a method to compute a ride-pooling request assignment that is optimal w.r.t.\ the minimum user travel time problem, a particular case of the minimum vehicle travel time problem.
Third, we showcase our framework with two case studies of Sioux Falls, USA, and Manhattan, NYC, USA, where we show that the inclusion of ride-pooling can significantly benefit ride-sharing mobility systems.

A preliminary version of this paper will be presented at the 2023 IEEE Conference on Decision and Control~\cite{PaparellaPedrosoEtAl2023}.
In this extended version, we carry out a broader literature review, extend the  approach from two to an arbitrary number of $K$ people ride-pooling, carry out a complexity analysis of the problem, and compare the quality of the solution obtained w.r.t.\ the granularity of the road network.
Moreover, we conduct a real-world case study of Sioux Falls, USA, and one for Manhattan, USA. In addition, the results of Manhattan are compared with the offline oracle model (all the requests are known in advance) of Santi et al.~\cite{Santi2014}, a microscopic model whose complexity depends on the number of travel requests.

\textit{Organization:} The remainder of this paper is structured as follows: Section~\ref{sec:model} introduces the multi-commodity traffic flow problem and the framework to capture ride-pooling. Section~\ref{sec:res} details the case studies of Sioux Falls and Manhattan.
Last, in Section~\ref{sec:conc}, we draw the conclusions from our findings and provide an outlook on future research endeavors.

\emph{Notation:}  We denote the vector with all elements equal to 1, of appropriate dimensions, by $\mathds{1}$. The $i$th component of a vector $v$ is denoted by $v_i$ and the entry $(i,j)$ of a matrix $A$ is denoted by $A_{ij}$. The cardinality of set $\mathcal{S}$ is denoted by $\abs{\mathcal{S}}$. Given a bag $\mathcal{C}$, its support is denoted by $\mathrm{Supp}(\mathcal{C})$, the multiplicity of element $i$ is denoted by $m_\mathcal{C}(i)$, and its cardinality is denoted by $|\mathcal{C}|:= \sum_{i\in\mathrm{Supp}(\mathcal{C})} m_\mathcal{C}(i)$.

\section{Ride-pooling Network Flow Model}\label{sec:model}

In this section, we introduce the network  traffic flow model~\cite{PavoneSmithEtAl2012}. Then, we extend it to take into account ride-pooling, and finally present a brief discussion of the model.



\subsection{Time-invariant Network Flow Model} 

We model the mobility system as a multi-commodity network flow model, similar to the approaches of~\cite{SalazarLanzettiEtAl2019,LukeSalazarEtAl2021,PaparellaChauhanEtAl2023,RossiIglesiasEtAl2018b,PaparellaSripanhaEtAl2022}. The transportation network is a directed graph $\mathcal{G} = (\mathcal{V}, \mathcal{A})$. {It consists of a set of vertices $\mathcal{V} := \{1,2,...,\abs{\mathcal{V}}\}$,} representing the location of intersections on the road network, and {a set of arcs $\mathcal{A} \subseteq \mathcal{V} \times \mathcal{V}$}, representing the road links between {intersections}. We indicate ${B \in \{-1,0,1\}^{\abs{\cV} \times \abs{\cA}}}$ as the incidence matrix~\cite{Bullo2018} of the road network $\mathcal{G}$. Consider an arbitrary arc indexing of natural numbers $\{1,\ldots,\abs{\cA}\}$, then $B_{ia} = -1$ if the arc indexed by $a$ is directed towards vertex $i$, $B_{ia} = 1$ if the arc indexed by $a$ leaves vertex $i$,  and $B_{ia}=0$ otherwise. We denote $t$ as the vector whose entries are the travel time $t_{a}$ required to traverse each arc $a\in \cA$, ordered in accordance with the arc ordering of $B$. 
Similarly to \cite{PavoneSmithEtAl2012}, we define travel requests as follows:
\begin{definition}[Requests]
	A travel request is defined as the tuple $r = (o,d,\alpha) \in \mathcal{V} \times \mathcal{V} \times \mathbb{R}_{>0}$, in which $\alpha$ is the number of users traveling from the origin $o$ to the destination $d \neq o$ per unit time.  Define the set of requests as $\mathcal{R} := \{r_m\}_{m\in \mathcal{M}}$, where $\mathcal{M} = \{1,\ldots,M\}$.
\end{definition} 

We assume, without any loss of generality, that the origin-destination pairs of the requests $r_m \in \mathcal{R}$ are distinct. 
In this paper, we distinguish between active vehicle flows, which correspond to the flows of vehicles serving users whether they are ride-pooling or not, and rebalancing flows which correspond to the flows of empty vehicles between the drop-off and pick-up vertices of consecutive requests. We define the active vehicle flow induced by all the requests that share the same origin $i \in {\mathcal{V}}$ as vector $x^{i}$, where element $x^{i}_{a}$ is the flow on arc $a \in \cA$, ordered in accordance with the arc ordering of $B$. The overall active vehicle flow is a matrix $X \in \mathbb{R}^{\abs{\cA} \times \abs{\cV}}$ defined as $X := \left[x^{1}\  x^2 \, \dots \,x^{\abs{\cV}}\right]$, where we assume every node is a potential origin. The rebalancing flow across the arcs is denoted by $x^{\mathrm{r}} \in \mathbb{R}^{\abs{\cA}}$. In the following, we define the network flow problem.
\begin{prob}[Multi-commodity Network Flow Problem]\label{prob:main}
	Given a road graph $\cG$ and a set of travel requests $\mathcal{R}$, the active vehicle flows $X$ and rebalancing flow $x^\mathrm{r}$ that minimize the cost in terms of overall travel time result from
	\begin{equation*}
		\begin{aligned}
			\min_{X, x^\mathrm{r}}\; &J(X,x^\mathrm{r}) =t^\top ( X \mathds{1} + \rho x^\mathrm{r} )  \\
			\mathrm{s.t. }\; & BX = D \\
			&B ( X \mathds{1}+ x^\mathrm{r} )=0 \\
			& X, x^\mathrm{r} \geq 0,
		\end{aligned}
	\end{equation*}
	where $\rho$ is a weighting factor, and the demand matrix $D \in \mathbb{R}^{\abs{\mathcal{V}} \times \abs{\mathcal{V}}}$ represents the requests between every pair of vertices, whose entries are
	\begin{equation}\label{eq:def_D}
		\!\!\!\!D_{ij} = \begin{cases}
			\alpha_m, &  \exists m \in \mathcal{M} : o_m = j \land d_m = i\\
			-\sum_{k\neq j} D_{kj}, & i  = j \\ 
			0, &   \mathrm{otherwise}.
		\end{cases}\!\!\!
	\end{equation}
\end{prob}
For $\rho = 0$, Problem~\ref{prob:main} is totally unimodular, and $X$ and $x^\mathrm{r}$ can be decoupled and computed separately~\cite{Rossi2018}. In this case the objective function can be interpreted as the minimum user travel time. For $\rho =1$ the objective is the vehicle minimum travel time, that is equivalent to the minimum fleet size problem, i.e., the minimum number of vehicles required to implement the flows~\cite{PavoneSmithEtAl2012,Rossi2018}. 

%


\subsection{Ride-pooling Time-invariant Network Flow Model}

In this paper, we propose a formulation to take into account ride-pooling of, at most, $K$ requests without the need to change the original structure of the problem. We transform the original set of requests, portrayed by $D$, into an equivalent set of requests accounting for ride-pooling, portrayed by $D^\mathrm{rp}$. We define the ride-pooling network flow problem as follows:
\begin{prob}[Ride-pooling Network Flow Problem]\label{prob:rides}
	Given a road graph $\cG$ and a demand matrix $D^\mathrm{rp}$, the active vehicle flows $X$ and rebalancing flow $x^\mathrm{r}$ that minimize the cost in terms of overall travel time result from
	\begin{equation*}
		\begin{aligned}
			\min_{X,x^\mathrm{r}}\; &J(X,x^\mathrm{r}) = {t^\top (X \mathds{1} + \rho x^r)} \\
			\mathrm{s.t. }\; & BX = D^\mathrm{rp} \\
			&B ( X \mathds{1}+ x^\mathrm{r} )= 0 \\
			& X, x^\mathrm{r} \geq 0.
		\end{aligned}
	\end{equation*}
\end{prob}

The ride-pooling demand matrix $ D^\mathrm{rp}$ in Problem~\ref{prob:rides}, which describes the pooling pattern, has to be determined according to four key conditions. First, the individual requests, described by $D$, must be served. Second, ride-pooling $k\leq K$ requests is only spatially feasible if the detour travel time of every user is not greater than a threshold $\bar{\delta} \in \mathbb{R}_{ \geq 0}$. Third, ride-pooling $k\leq K$ requests is only temporally feasible if the maximum waiting time for a request to start being served does not exceed a threshold $\bar{t} \in \mathbb{R}_{>0}$. Fourth, the requests are pooled to minimize the cost function of Problem~\ref{prob:rides} at its solution. Due to the combinatorial nature of such an endeavor, we relax the problem in order to devise a computationally tractable algorithm, according to the following approximation:


\begin{approximation}\label{approx}
For the purpose of computing the demand matrix $D^\mathrm{rp}$, we set $\rho = 0$ and the cost function of Problem~\ref{prob:rides} becomes $\tilde{J}(X):=t^\top X \mathds{1}$.
\end{approximation}

It is crucial to remark that this approximation is only employed in a first step to compute $D^\mathrm{rp}$ and, afterwards, in a second stage, the passenger and rebalancing flows are optimized jointly according to Problem~\ref{prob:rides} with weight $\rho =1$. This is in order for an Autonomous Mobility-on-Demand fleet, for example, which is centrally operated and whose vehicles do not compete for rides.  
It is a very common approximation used in the literature \cite{Alonso_Mora_2017}, which we leverage to devise a polynomial-time algorithm to compute $D^\mathrm{rp}$ that is optimal w.r.t.\ the approximated version of the problem. 



\subsection{Approximate  Computation of the Demand Matrix}

In this section, we present a framework to compute the demand matrix $D^\mathrm{rp}$ under Approximation~\ref{approx}.  


\subsubsection{Spatial Analysis of Ride-pooling}\label{sec:SD}
We analyze the feasibility and optimal configuration of ride-pooling $k\leq K$ requests from a spatial perspective. First, we define $\delta$ as the delay experienced by each user, representing the time required to travel the additional detour distance w.r.t.\ the scenario without ride-pooling. If the delay experienced by any of the $k$ users is higher than the threshold $\bar{\delta}$, then pooling the $k$ requests is unfeasible. 

Second, given the feasible pooling itineraries, we analyze which one is the optimal, i.e., the best itinerary to serve the requests, and whether pooling is advantageous compared to no pooling. Let $C_k(\mathcal{M})$ denote all combinations with repetition of $k$ elements of $\mathcal{M}$, thus an element $\mathcal{C} \in C_k(\mathcal{M})$ is a bag of request indices, i.e., a relaxed concept of a set that allows for repeated elements. For instance, we denote the bag that contains $a,b\in \mathbb{N}$ once and $c\in \mathbb{N}$ twice by $[a,b,c,c]$. Therefore, $\bigcup_{k = 1}^K C_k(\mathcal{M})$ is the set of all bags of at most $K$ requests that can be ride-pooled. Consider a bag of requests to ride-pool $\mathcal{C} \in \bigcup_{k = 1}^K C_k(\mathcal{M})$. To restrict this analysis to the spatial dimension, we temporarily make two key considerations, that we lift in Section~\ref{sec:STF}: i)~ all requests  $r_n$ with $n\in\mathcal{C}$ are made at the same time; and ii)~all requests have the same demand, which we set, without any loss of generality, to $\alpha=\unit[1]{requests/unit \; time}$. 
\begin{figure}[t]
	\centering
	\includegraphics[width = 0.65\linewidth]{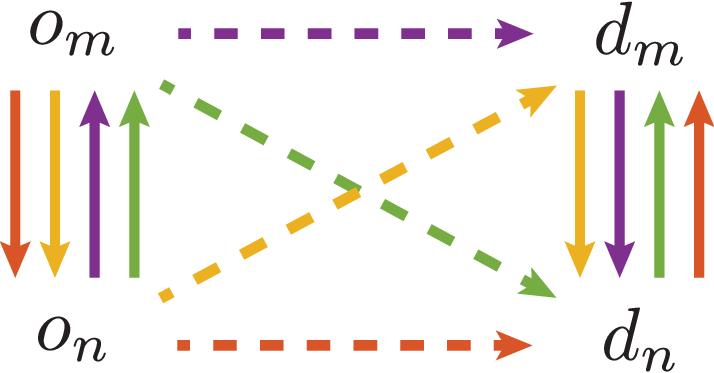}
	\caption{Distinct configurations for serving a bag $\mathcal{C} = [n,m]$ of two requests $r_m,r_n \in \cR$. Each arrow represents a flow of $\alpha = 1$ vehicles. The dashed arrows represent a flow with two users, whilst the solid ones represent a flow with one user~\cite{PaparellaPedrosoEtAl2023}.}
	\label{fig:serve_conf}
	\vspace{-0.5cm}
\end{figure}
There are several different ways to serve a bag of requests in $\mathcal{C}$. Each corresponds to a sequence of origins and destinations of the requests in $\mathcal{C}$ such that i)~for each request the corresponding origin is visited before the destination; and ii)~no edge carries a null user flow. Denote the set of such sequences by $\mathcal{S}_{\mathcal{C}}$.

To illustrate the modeling framework, consider a bag of two requests $r_m,r_n \in \mathcal{R}$, which is denoted as $\mathcal{C} = [m,n] = [n,m] \in C_2(\mathcal{M})$. In this specific case, there are four different sequences of serving these requests, as depicted in Fig.~\ref{fig:serve_conf}. Note that, for instance, the sequence $(o_m,d_m,o_n,d_n) \notin \mathcal{S}_{\mathcal{C}}$ is not valid because the edge from $d_m$ to $o_n$ has a null user flow. In fact, two bags of a single request, namely $[n]$ and $[m]$, cover the same user flows without the need for an edge with null user flow. On the contrary, for example, $(o_m,o_n,d_m,d_n)\in \mathcal{S}_{\mathcal{C}}$ because there is always at least one user in the vehicle. 

The goal is to assess whether it is feasible to ride-pool the requests in a bag $\mathcal{C} \in \bigcup_{k = 1}^K C_k(\mathcal{M})$ and find the best sequence of nodes to serve them. Note that each sequence that serves a bag $\mathcal{C} \in C_k(\mathcal{M})$ can be split into at most $2k-1$ travel requests. For example, the sequence $(o_m,o_n,d_n,d_m)$ represented in orange in Fig.~\ref{fig:serve_conf} can be split into three vehicle flow requests: $(o_m,o_n,\alpha = 1)$,  $(o_n,d_n,\alpha = 1)$, and  $(d_n,d_m,\alpha = 1)$. Note that during $(o_n,d_n,\alpha = 1)$ there are two users per vehicle request because they ride-pool, whereas in the remainder there is only one. For a bag $\mathcal{C}$ and a sequence $s\in \mathcal{S}_{\mathcal{C}}$, denote the set of such equivalent vehicle flow requests by $\mathcal{R}^s_\mathcal{C}$ and the corresponding demand matrix, given by \eqref{eq:def_D}, by $D^{\mathcal{C},s}$. Solving Problem~\ref{prob:rides}, under Approximation~\ref{approx},  with a simplified demand matrix  $D^\mathrm{rp} = D^{\mathcal{C},s}$ we obtain a vehicle flow $X^{\mathcal{C},s} \in \mathbb{R}^{\abs{\mathcal{V}}\times \abs{\mathcal{V}}}$, which is equivalent to a shortest path procedure. Additionally, to characterize the baseline without ride-pooling, denote by $t_m^0$ the time it takes to serve request $r_m$ with $m\in \mathcal{C}$ individually, which can also be easily computed with a shortest path procedure. Now, ride-pooling a bag $\mathcal{C}$ with a sequence $s\in \mathcal{S}_{\mathcal{C}}$ incurs a delay of serving request $r_m$ with $m\in \mathcal{C}$ of
\begin{equation*}
	\delta^{\mathcal{C},s}_m = \sum_{p \in {\pi^{\mathcal{C},s}_{m}} } [t^\top X^{\mathcal{C},s}]_p   - t_m^0 
\end{equation*}%
w.r.t. no ride-pooling, where $\pi^{\mathcal{C},s}_{m}$ is the sequence of nodes contained in $s$ starting at $o_m$ and ending in the node before $d_m$. Thus, ride-pooling a bag $\mathcal{C}$ is feasible from a spatial perspective if there is $s\in\mathcal{S}_\mathcal{C}$ such that $\delta^{\mathcal{C},s}_m \leq \bar{\delta}, \,\forall m \in \mathcal{C}$. Denote by $C_k^{\bar{\delta}}$ the set of feasible bags of $k$ requests. If feasible, the optimal sequence is the one that achieves the lowest cost among the feasible ones, i.e., 
\begin{equation*}
	\begin{split}
	\mathcal{S}^\star_{\mathcal{C}} := \argmin_{s\in \mathcal{S}_\mathcal{C}} \quad & \tilde{J}(X^{\mathcal{C},s})\\
	\mathrm{s.t.}  \quad & \delta^{\mathcal{C},s}_m \leq \bar{\delta}, \,\forall m \in \mathcal{C} .
	\end{split}
\end{equation*}
Finally, we define the optimal demand matrix of a feasible bag of requests $\mathcal{C}$ as $D^{\mathcal{C},\star} = D^{\mathcal{C},s}$ with some $s\in \mathcal{S}^\star_\mathcal{C}$ chosen randomly.


\begin{remark} 
	A detailed computational complexity analysis of the computation of $D^{\mathcal{C},\star}$ for all $\mathcal{C} \in \bigcup_{k = 1}^K C^{\bar{\delta}}_k(\mathcal{M})$ is carried out in Appendix~\ref{sec:comp_complex}. It is shown that a simple algorithm has a computational complexity that grows with $\mathcal{O}((2K!)K |\mathcal{V}|^{2K+2})$ with $K$ and $|\mathcal{V}|$. Notice that the meaningful values of $K$ are generally very small and that, for a fixed and bounded $K$, the complexity is polynomial w.r.t. the dimension of the network. Furthermore, this computation is very easy to parallelize and has to be carried out only once.
\end{remark}

\subsubsection{Temporal Analysis of Ride-pooling}\label{sec:TD}

In this section, we analyze the temporal alignment of $k$ requests for ride-pooling.
We derive the probability of $k$ requests taking place within the maximum waiting time, $\bar{t}$. As common in traffic flow models~\cite{PavoneSmithEtAl2012}, we consider that the arrival rate of a request $r_m \in \cR$ follows a Poisson process with parameter $\alpha_m$. In the following lemma, we indicate the probability of $k$ events occurring within a maximum time window $\bar{t}$.

\begin{lemma}\label{lemma:finalprob}
	Let $r_{1}, \ldots, r_{k} \in \mathcal{R}$ be $k$ requests whose arrival rates follow independent Poisson processes with parameters $\alpha_{1}, \ldots, \alpha_{k}$, respectively. The probability of all having an occurrence within a maximum time interval $\bar{t}$ is
	\begin{equation}\label{eq:lem}
		P_{\bar{t}}\left(\alpha_{1}, \ldots, \alpha_{k}\right)=\sum_{i=1}^{k} \frac{\alpha_{i}}{\sum_{j=1}^{k} \alpha_{j}} \prod_{\substack{j=1\\j\neq i}}^{k}\left(1-e^{-\alpha_{j} \bar{t}}\right).
	\end{equation}
\end{lemma}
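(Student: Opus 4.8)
The plan is to model the first arrival of each Poisson process as an exponential random variable and reduce the statement to a computation about the order statistics of independent exponentials. Concretely, for each $j$ let $X_j$ denote the waiting time until the first occurrence of process $j$; since the process is Poisson with rate $\alpha_j$, we have $X_j \sim \mathrm{Exp}(\alpha_j)$ and the $X_j$ are mutually independent. The event ``all $k$ requests occur within a window of length $\bar{t}$'' means that the spread between the earliest and the latest of these first arrivals is at most $\bar{t}$, i.e.\ $E = \{\max_{j} X_j - \min_{j} X_j \le \bar{t}\}$. The first step is therefore to decompose $E$ according to which process fires first, writing $P(E) = \sum_{i=1}^{k} P(E \cap A_i)$ where $A_i = \{X_i = \min_j X_j\}$ (ties occur with probability zero).

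Next I would evaluate each term $P(E\cap A_i)$ by conditioning on the value $X_i = s$ of the first arrival. On $A_i$, every other arrival must satisfy $X_j > s$ (so that $i$ is indeed first) and $X_j \le s + \bar{t}$ (so that it lies inside the window), and by independence these constraints factorize. Using $P(s < X_j \le s + \bar{t}) = e^{-\alpha_j s}\bigl(1 - e^{-\alpha_j \bar{t}}\bigr)$ together with the density $\alpha_i e^{-\alpha_i s}$ of $X_i$, this gives
\begin{equation*}
P(E \cap A_i) = \int_0^\infty \alpha_i e^{-\alpha_i s} \prod_{\substack{j=1\\ j\neq i}}^{k} e^{-\alpha_j s}\bigl(1 - e^{-\alpha_j \bar{t}}\bigr)\, \mathrm{d}s .
\end{equation*}
The $s$-independent factors $\alpha_i \prod_{j\neq i}(1 - e^{-\alpha_j \bar{t}})$ come out of the integral, the remaining exponentials collapse to $e^{-(\sum_j \alpha_j) s}$ whose integral over $[0,\infty)$ equals $1/\sum_j \alpha_j$, and summing over $i$ then yields exactly \eqref{eq:lem}.

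An equivalent, more probabilistic route avoids the integral: one invokes the classical competing-exponentials identity $P(A_i) = \alpha_i / \sum_{j} \alpha_j$ together with the memorylessness of the exponential distribution to argue that, conditioned on process $i$ firing first, each remaining process restarts its clock and fires within $\bar{t}$ independently with probability $1 - e^{-\alpha_j \bar{t}}$; the law of total probability over the $A_i$ then reproduces the same expression. I would most likely present the integral version, as it is entirely self-contained and makes the factorization explicit.

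The only genuinely delicate point is the conditioning step: one must verify that, after conditioning on $X_i = s$ being the minimum, the surviving constraints on the remaining $X_j$ reduce cleanly to the two-sided bound $s < X_j \le s + \bar{t}$, and—in the probabilistic route—that the memorylessness argument legitimately decouples the residual waiting times. Everything else is routine calculus. As a sanity check, the formula returns $1$ for $k=1$ (the empty product), consistent with a single request trivially lying within any window.
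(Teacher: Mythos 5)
Your proof is correct, but it takes a genuinely different route from the paper's. Both arguments start identically---the first arrivals are independent exponentials $X_j \sim \mathrm{Exp}(\alpha_j)$ and the target event is $\{\max_j X_j - \min_j X_j \le \bar{t}\}$---but the paper then computes the joint cumulative distribution function of the pair $(\max,\min)$ via inclusion-exclusion, $F_{XY}(x,y)=\prod_i F_{E_i}(x)-\prod_i \bigl(F_{E_i}(x)-F_{E_i}(y)\bigr)$, differentiates it to get the joint density, and evaluates the resulting double integral over the region $\{x \le y+\bar{t}\}$. You instead partition according to which request arrives first and integrate over the value of that minimum, so each term factorizes by independence into $\alpha_i \prod_{j\neq i}\bigl(1-e^{-\alpha_j \bar{t}}\bigr)\int_0^\infty e^{-(\sum_j \alpha_j)s}\,\mathrm{d}s$, collapsing the whole computation to a single one-dimensional integral. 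Your route is shorter and more transparent: the weight $\alpha_i/\sum_j\alpha_j$ acquires its natural interpretation as the probability that request $i$ fires first, and the memorylessness variant you sketch dispenses with integration altogether. The point you flag as delicate is in fact unproblematic in your integral version: $P(E\cap A_i)$ is just an expectation over the density of $X_i$ of a product of independent probabilities $P(s< X_j \le s+\bar{t})$, i.e., independence plus Fubini, with no conditioning on a null event required. What the paper's heavier computation buys is the explicit joint law of the minimum and maximum (and hence the distribution function of the range $Z=X-Y$ as an intermediate object), but for the lemma as stated your argument suffices and is arguably cleaner.
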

\begin{proof}
	The proof can be found in Appendix~\ref{app:1}.
\end{proof}

Consider $k$ requests in a bag $\mathcal{C}\in C_k(\mathcal{M})$. 
Henceforth, the probability of all requests in $\mathcal{C}$ having an event within a maximum time window $\bar{t}$ is given by $P_{\bar{t}}(\alpha_i, i\in \mathcal{C})$.

\subsubsection{Expected Number of Pooled Rides}
\label{sec:STF}
In Section~\ref{sec:SD}, we analyzed the spatial dimension of the ride-pooling problem, whereby we computed the best feasible pooling path given a bag of up to $K$ requests. In Section~\ref{sec:TD}, we analyzed the temporal dimension of the ride-polling problem, whereby we derived the probability of all requests in a bag of up to $K$ requests taking place within a time window $\bar{t}$.  By lifting the temporary assumptions made in Section~\ref{sec:SD}, namely i) the requests are made at the same time and ii) $\alpha=\unit[1]{requests/unit \; time}$, we formulate the ride-pooling demand matrix given a certain pooling assignment, defined in the following:

Given a bag of requests $\mathcal{C} \in \bigcup_{k = 1}^K C^{\bar{\delta}}_k(\mathcal{M})$, a fraction of the demand of those requests can be assigned to be pooled together according to a sequence $s^\star\in \mathcal{S}^\star_\mathcal{C}$, as discussed in Section~\ref{sec:SD}. Consider that each request $r_m$ with $m\in \mathcal{C}$ has some demand flow $\alpha^\prime_m$ that is assigned to be ride-pooled in $\mathcal{C}$. Notice that the maximum pooled vehicle demand for requests $\mathcal{C}$ is $\min(\alpha^\prime_i/m_{\mathcal{C}}(i), i\in \mathcal{C})$, which would be exactly achieved for an infinite waiting time. It is interesting to remark that one must account for the multiplicity of the elements in $\mathcal{C}$, since if the same request $r_i$ with $i\in \mathcal{C}$ is pooled with itself with a multiplicity of $m_{\mathcal{C}}(i)$, then the maximum vehicle flow for that request is naturally the fraction of the original person demand by the number of people served by the same vehicle. However, from the analysis in Section~\ref{sec:TD}, becomes evident that only a fraction of that demand can actually be pooled due to the aforementioned temporal constraints. Specifically, the probability of pooling bag $\mathcal{C}$ is given by $P_{\bar{t}}(\alpha_i^\prime, i\in \mathcal{C})$ according to Lemma~\ref{lemma:finalprob}. Therefore, the effective expected pooled demand of $\mathcal{C}$  is given by $ \gamma_\mathcal{C} := {\min(\alpha^\prime_i/m_{\mathcal{C}}(i), i\in \mathcal{C})}P_{\bar{t}}(\alpha_i^\prime, i\in \mathcal{C})$.  As a result, according to the spatial analysis in Section~\ref{sec:SD}, this pooled demand is portrayed by the demand matrix $\gamma_{\mathcal{C}}D^{\mathcal{C},\star}$. The full ride-pooling demand matrix $D^\mathrm{rp}$ is made up of two contributions: i)~the sum of the expected pooled active vehicle flows of the form $\gamma_{\mathcal{C}}D^{\mathcal{C},\star}$ for $\mathcal{C} \in \bigcup_{k = 1}^K C^{\bar{\delta}}_k(\mathcal{M})$; and ii)~the requested demands that were not ride-pooled.  Thus, the entry $(i,j)$ of $D^\mathrm{rp}$ can be written as
\begin{equation*}\label{eq:b}
		\!D_{ij}^{\mathrm{rp}} \! = \!\begin{cases}
			\alpha_m \! + \!\!\!\!  \!\!\! \sum\limits_{{\mathcal{C} \in \bigcup \limits_{k = 1}^K \!\! C^{\bar{\delta}}_k(\mathcal{M}) \; :\; m\notin \mathcal{C}}}\!\!\!\! \!\!\!\!  \!\!\!\!  \!\!\!\!   \gamma_{\mathcal{C}}D_{ij}^{\mathcal{C},\star}, &\exists m \!\in\! \mathcal{M}\! :\! (d_m,o_m)\! = \!(i,j) \\[2.5em]
			- \sum_{k\neq j} D_{kj}^{\mathrm{rp}},  &i  = j \\[1em]
			\sum\limits_{{\mathcal{C} \in \bigcup \limits_{k = 1}^K \!\! C^{\bar{\delta}}_k(\mathcal{M})}} \!\!\!\! \!\!\!\!  \gamma_{\mathcal{C}}D_{ij}^{\mathcal{C},\star},  &\text{otherwise}.
		\end{cases}
\end{equation*}

Finally, given the assigned demand flows $\alpha^\prime_m$ for $r_m$ that can be ride-pooled in $\mathcal{C}$, one can input $D^\mathrm{rp}$ to Problem~\ref{prob:rides}, which yields an LP. In the next section, the optimal choice for the assignment demands $\alpha^\prime_m$ is derived.

\subsubsection{Optimal Ride-pooling Assignment}\label{sec:alg}

In this section, we will compute the optimal ride-pooling assigned demands $\alpha^\prime_m$ for $r_m$ that can be ride-pooled in $\mathcal{C}$ and the corresponding $\gamma_\mathcal{C}$, under Approximation~\ref{approx}, leveraging an iterative approach, which is described in what follows. For every bag of requests $\mathcal{C} \in \bigcup_{k = 1}^K  C^{\bar{\delta}}_k(\mathcal{M})$, we can compute the unitary improvement of the objective function of Problem~\ref{prob:rides}, denoted by $\Delta \tilde{J}_{\mathcal{C}}$, w.r.t.\ the no-pooling scenario. To characterize the baseline without ride-pooling, define  the demand matrix $D^{\mathcal{C},0}$ which is a simplified demand matrix that characterizes the individual (not necessarily unique) requests in $\mathcal{C}$ with unitary flow. Solving Problem~\ref{prob:rides}, under Approximation~\ref{approx},  with a simplified demand matrix  $D^\mathrm{rp} = D^{\mathcal{C},0}$ we obtain a vehicle flow $X^{\mathcal{C},0} \in \mathbb{R}^{\abs{\mathcal{V}}\times \abs{\mathcal{V}}}$, which again amounts to a graph search procedure. Then,  $\Delta \tilde{J}_{\mathcal{C}}$ amounts to the difference  between $\tilde{J}(X^{\mathcal{C},\star})$ and $\tilde{J}(X^{\mathcal{C},0})$.
	
We now propose an algorithm to compute the optimal ride-pooling assignment. Let $\alpha_m^\prime, m\in \cM$ represent the demand of request $r_m$ that has not yet been assigned, which is updated every iteration of the algorithm and is initialized as $\alpha_m^\prime = \alpha_m$. Each iteration, the bag of requests with the highest relative improvement  w.r.t.\ the user flow is prioritized with the highest possible pooling demand assignment. That is, in each iteration, if $\mathcal{C} \in \bigcup_{k = 1}^K C^{\bar{\delta}}_k(\mathcal{M})$  is the pair of requests with the highest $\Delta \tilde{J}_{\mathcal{C}}/|\mathcal{C}|$, we assign $\alpha_m^\prime, m\in \mathcal{C}$ as the demand to be pooled in $\mathcal{C}$. Moreover, the rides that have been assigned but not pooled, are added back to the original requests, i.e., we set $\alpha_m^\prime  \leftarrow  \alpha_m^\prime-\gamma_{\mathcal{C}}m_\mathcal{C}(m), \;\forall m\in \mathrm{Supp}(\mathcal{C})$.
This procedure is repeated until convergence is achieved, i.e., $\max_{\mathcal{C} \in \bigcup_{k = 1}^K  C^{\bar{\delta}}_k(\mathcal{M})} (\Delta \tilde{J}^\prime_{\mathcal{C}}/|\mathcal{C}|) \leq 0$. The pseudocode of this procedure is presented in Algorithm~\ref{alg:one}. In the following theorem, we establish the convergence and optimality of Algorithm~\ref{alg:one}.

\begin{algorithm}[t] 
	\caption{Compute optimal assignment flows $\gamma_\mathcal{C}$.}\label{alg:one}
	\begin{algorithmic}
		
		\STATE $\Delta \tilde{J}_{\mathcal{C}} \; \leftarrow \; \tilde{J}(X^{\mathcal{C},\star})- \tilde{J}(X^{\mathcal{C},0}),  \quad \forall \mathcal{C} \in \bigcup_{k = 1}^K  C^{\bar{\delta}}_k(\mathcal{M})$ 
		\STATE $\alpha_{m}^\prime \;\leftarrow\; \alpha_{m}, \quad \forall m\in \cM $
		\WHILE {$\max_{\mathcal{C} \in \bigcup_{k = 1}^K  C^{\bar{\delta}}_k(\mathcal{M})} (\Delta \tilde{J}_{\mathcal{C}}/|\mathcal{C}|) > 0$}
		\STATE $ \mathcal{C} \in \mathrm{argmax}_{\mathcal{C} \in \bigcup_{k = 1}^K  C^{\bar{\delta}}_k(\mathcal{M})} (\Delta \tilde{J}_{\mathcal{C}}/|\mathcal{C}|)  \phantom{l^{l^{l}}}$
		\STATE $ \gamma_\mathcal{C} = \min\left(\alpha^\prime_m/m_{\mathcal{C}}(m), m\in \mathcal{C}\right) P_{\bar{t}}(\alpha_m^\prime, m\in \mathcal{C})\phantom{l^{l^{l^{l}}}}$
		\STATE $\alpha_{m}^\prime  \;\leftarrow\;  \alpha_{m}^\prime - m_\mathcal{C}(m)\gamma_{\mathcal{C}} \quad \forall m\in \mathrm{Supp}(\mathcal{C})  \phantom{l^{l^{l^{l^l}}}} $
		\STATE $\Delta \tilde{J}_{\mathcal{C}} \; \leftarrow \; 0   \phantom{l^{l^{l^{l^l}}}}$
		\ENDWHILE
	\end{algorithmic}
\end{algorithm}

\begin{theorem}\label{theorem:one}
Let $X^\star_\gamma$ denote the optimal solution of Problem~\ref{prob:rides}, under Approximation~\ref{approx}, for the effective ride-pooling allocation $\gamma_\mathcal{C}, \; \mathcal{C} \in \bigcup_{k = 1}^K  C^{\bar{\delta}}_k(\mathcal{M})$. Then, in $\abs{\cM}(\abs{\cM}^K-1)/(\abs{\cM}-1)$ iterations at most, Algorithm~\ref{alg:one} converges to a minimizer of $\tilde{J}(X^\star_\gamma)$ among all valid effective ride-pooling allocation patterns.
\end{theorem}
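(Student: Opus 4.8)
The plan is to treat the statement as two assertions—the iteration bound and the global optimality of the limiting allocation—and to prove them by separate arguments.

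\emph{Iteration bound.} This follows from a pure counting argument. Each pass of the \texttt{while} loop selects one bag $\mathcal{C}$, assigns it some effective demand, and then executes $\Delta\tilde{J}_{\mathcal{C}}\leftarrow 0$, while the stored improvements of every other bag are left untouched. Since a bag qualifies for selection only while its stored value $\Delta\tilde{J}_{\mathcal{C}}/\abs{\mathcal{C}}$ is strictly positive, a bag that has been zeroed can never be selected again. Hence each bag of $\bigcup_{k=1}^{K} C^{\bar{\delta}}_k(\mathcal{M})$ is processed at most once, so the loop halts after at most $\sum_{k=1}^{K}\abs{C_k(\mathcal{M})}$ passes. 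Bounding the number of size-$k$ multisets by the number of ordered $k$-tuples, $\abs{C_k(\mathcal{M})}=\binom{\abs{\cM}+k-1}{k}\le\abs{\cM}^{k}$, and summing the geometric series yields $\sum_{k=1}^{K}\abs{\cM}^{k}=\abs{\cM}(\abs{\cM}^{K}-1)/(\abs{\cM}-1)$, which is the claimed bound.

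\emph{Linear structure of the objective.} The backbone of the optimality proof is that, under Approximation~\ref{approx} ($\rho=0$) and in the absence of arc capacities, Problem~\ref{prob:rides} decouples by origin and the minimum-cost routing of each commodity is a collection of shortest paths. Consequently $\tilde{J}(X^\star_\gamma)$ equals the sum over origin--destination pairs of the demand times the fixed shortest-path travel time, and since $D^\mathrm{rp}$ is an affine function of the allocation (the individual baseline plus $\sum_{\mathcal{C}}\gamma_{\mathcal{C}}D^{\mathcal{C},\star}$ with the pooled portions removed), I obtain the exact decomposition
\begin{equation*}
\tilde{J}(X^\star_\gamma)=\tilde{J}_0+\sum_{\mathcal{C}}\gamma_{\mathcal{C}}\,\Delta\tilde{J}_{\mathcal{C}},
\end{equation*}
where $\tilde{J}_0$ is the no-pooling cost and $\Delta\tilde{J}_{\mathcal{C}}=\tilde{J}(X^{\mathcal{C},\star})-\tilde{J}(X^{\mathcal{C},0})$ is exactly the per-unit quantity stored by the algorithm. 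Because there are no capacities the shortest-path prices are properties of the graph alone, so pooling one bag never reprices the demand of another and the decomposition is genuinely linear in $\gamma$; minimizing $\tilde{J}(X^\star_\gamma)$ is thus equivalent to maximizing $\sum_{\mathcal{C}}\gamma_{\mathcal{C}}(-\Delta\tilde{J}_{\mathcal{C}})$ over the valid allocations.

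\emph{Greedy optimality.} I would then recognize the residual problem as knapsack-like. Using effective demand $\gamma_{\mathcal{C}}$ on bag $\mathcal{C}$ consumes $m_{\mathcal{C}}(m)\gamma_{\mathcal{C}}$ of each constituent request, hence $\sum_{m\in\mathcal{C}}m_{\mathcal{C}}(m)\gamma_{\mathcal{C}}=\abs{\mathcal{C}}\gamma_{\mathcal{C}}$ units of request demand in total, so the value obtained per unit of consumed demand is precisely $(-\Delta\tilde{J}_{\mathcal{C}})/\abs{\mathcal{C}}$---the ranking key of Algorithm~\ref{alg:one}, which also explains the normalization by $\abs{\mathcal{C}}$. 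I would first check that feasibility is preserved, i.e.\ that the update $\alpha'_m\leftarrow\alpha'_m-m_{\mathcal{C}}(m)\gamma_{\mathcal{C}}$ keeps $\alpha'_m\ge 0$, which holds because $\gamma_{\mathcal{C}}\le\min_i\alpha'_i/m_{\mathcal{C}}(i)$ since $P_{\bar{t}}\le 1$. Optimality would then be established by an exchange argument: starting from any valid allocation that disagrees with the greedy one at the highest-density bag, divert effective demand into that bag and out of strictly lower-density bags and show the objective cannot decrease, concluding by induction on the greedy priority order; the terminal condition $\max_{\mathcal{C}}\Delta\tilde{J}_{\mathcal{C}}/\abs{\mathcal{C}}\le 0$ guarantees no beneficial pooling is left unexploited.

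\emph{Main obstacle.} I expect the exchange step to be the crux, since the reformulation is \emph{not} a single-constraint knapsack: each request is shared across many bags, so lowering one bag releases demand on several per-request budgets simultaneously, and a naive density-greedy can fail for generic multi-dimensional packing. The argument must therefore exploit the geometric structure of the prices---by the triangle inequality on travel times the pooled cost of any bag is at least the longest individual shortest path it contains, which caps each bag's improvement and, I expect, prevents a collection of overlapping lower-density bags from ever dominating the highest-density choice. A secondary difficulty is that $\gamma_{\mathcal{C}}$ depends nonlinearly on the remaining demand through $P_{\bar{t}}(\alpha'_m,m\in\mathcal{C})$; since $P_{\bar{t}}$ is monotone increasing in the available demand, processing a high-density bag earlier only enlarges its admissible $\gamma_{\mathcal{C}}$, which I would use to argue that deferring a high-density bag can never help. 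Weaving these geometric and monotonicity facts into the exchange inequality is where the substantive work of the proof concentrates.
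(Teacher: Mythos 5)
Your iteration bound is correct and is precisely the paper's own argument: once a bag is selected its stored improvement is set to zero, so it can never be selected again, and $\abs{\bigcup_{k=1}^{K} C^{\bar{\delta}}_k(\cM)} \le \sum_{k=1}^{K}\abs{\cM}^{k} = \abs{\cM}(\abs{\cM}^{K}-1)/(\abs{\cM}-1)$. Your linearization step is also the same ingredient the paper relies on: under Approximation~\ref{approx} the problem decouples into shortest-path routings with fixed arc costs, so the cost is affine in the allocation and each bag contributes $\Delta\tilde{J}_{\mathcal{C}}/\abs{\mathcal{C}}$ per unit of pooled user flow, independently of every other bag.

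The genuine gap is in the optimality half. You reduce the theorem to an exchange argument over what you correctly observe is a multi-request packing problem (each request's budget is shared across many bags, and $\gamma_{\mathcal{C}}$ depends nonlinearly on the residual demands through $P_{\bar{t}}$), you name the obstacles, and you stop there: the exchange inequality, which you yourself call the crux and ``where the substantive work concentrates,'' is never established. As written, the proposal therefore does not prove the theorem. It is worth knowing that the paper does not perform an exchange argument at all, nor does it invoke the triangle-inequality structure you anticipate needing. It closes this step by a direct analogy with Dantzig's continuous knapsack: the ride-pooled user flow $\abs{\mathcal{C}}\gamma_{\mathcal{C}}$ is treated as a single scarce resource, bounded in aggregate by $\sum_{\mathcal{C}}\abs{\mathcal{C}}\gamma_{\mathcal{C}} \le \sum_{i\in\cM}\alpha_i$; since by linearity the per-unit improvement $\Delta\tilde{J}_{\mathcal{C}}/\abs{\mathcal{C}}$ of each bag is a constant independent of all other allocations, allocating the maximum feasible flow to the bags of highest density is optimal, with the per-request budgets entering only through the cap $\gamma_{\mathcal{C}} \le \min\left(\alpha^\prime_m/m_{\mathcal{C}}(m), m\in \mathcal{C}\right) P_{\bar{t}}(\alpha_m^\prime, m\in \mathcal{C})$ on the bag selected at each iteration. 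In other words, the multi-budget coupling you worry about is exactly what the paper's argument abstracts away by passing to the aggregate resource; your instinct that this coupling deserves explicit treatment is a fair criticism of that terseness, but raising the difficulty without resolving it leaves your proof incomplete where the paper's proof (rightly or wrongly) concludes.
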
 
\begin{proof}
		The proof can be found in Appendix~\ref{app:1}. 
\end{proof}

\subsection{Discussion}

A few comments are in order. First, the mobility system is analyzed at steady-state in a time-invariant framework, which is unsuitable for an online implementations, but it has been used for planning and design purposes by several works in the literature as seen in Section~\ref{sec:intro}. This assumption is reasonable if the travel requests vary slowly w.r.t.\ the average time of serving each request. This is the case especially in highly populated metropolitan areas \cite{IglesiasRossiEtAl2017,Neuburger1971}.
Second, our framework does not take into account the stochastic nature of the exogenous congestion that determines that travel time in each road arc. However, this deterministic approach is suitable for our purposes as it provides an average representation of these stochastic phenomena on a mesoscopic scale~\cite{Neuburger1971}.
Third, Problems~\ref{prob:main} and \ref{prob:rides} allow for fractional flows, which is acceptable because of the mesoscopic perspective of the work~\cite{RossiZhangEtAl2017,SalazarLanzettiEtAl2019}. Randomized sampling methods can be leveraged to compute integer flows from fractional ones, achieving near-optimality~\cite[Ch.~4]{Rossi2018}. Moreover, new requests can be implemented through a receding horizon framework. Finally, $D^\mathrm{rp}$ is optimal only w.r.t.\ the objective function of Problem~\ref{prob:rides} with $\rho =0$, enabling a polynomial-time computation.
\section{Case Studies}\label{sec:res}
This section showcases our modeling and optimization framework in two case studies of Sioux Falls, USA and Manhattan, USA. In the first case study, conducted in Sioux Falls, the simulation is carried out for $K = 4$, i.e., four people ride pooling. In the second case study, we run a simulation for $K=2$, leveraging the original road network of $\abs{\cV}= 357$ nodes. Then, we compare the results obtained from this macroscopic model with the aggregate ones of the microscopic model obtained by~\cite{Santi2014}. 
Problem~\ref{prob:rides} is parsed with YALMIP~\cite{Loefberg2004} and solved with Gurobi 9.5~\cite{GurobiOptimization2021}.
We compute its solution considering the optimal ride-pooling assignment of the relaxed problem, obtained as described in Section~\ref{sec:alg}, for a varying amount of hourly demands, obtained by uniformly scaling the demand of the historical requests, and for various waiting times and maximum delays. 

\subsection{Sioux Falls}
In this case study, we showcase our framework for Sioux Falls, USA. 
The road network, which is depicted in Fig.~\ref{fig:SF}, and the travel requests are retrieved from the GitHub repository of Transportation Networks for Research~\cite{ResearchCoreTeam}.
\begin{figure}[t]
	\centering
	\includegraphics[trim={0cm 300 0 80},clip,width=1\linewidth]{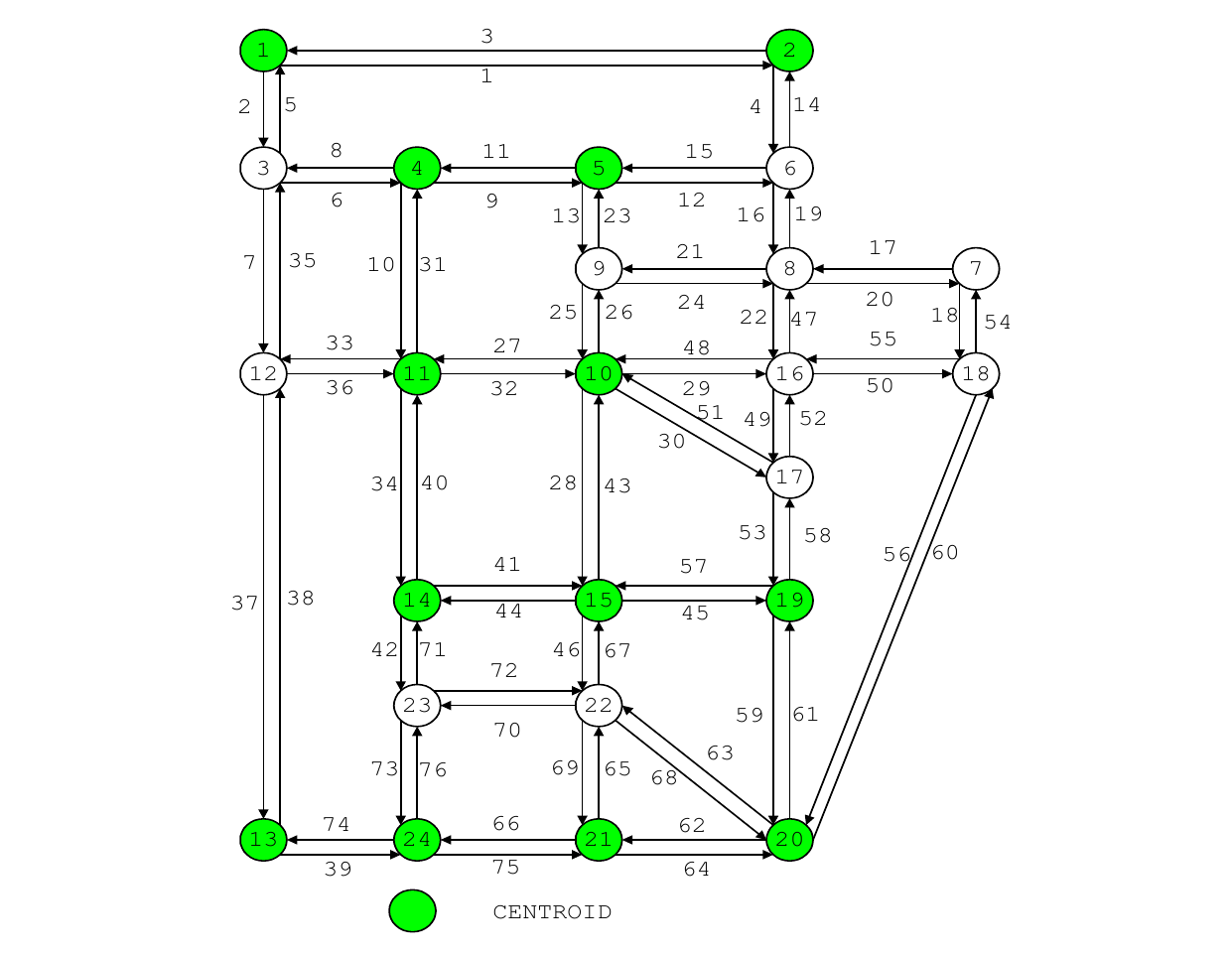}
	\caption{Road network of Sioux Falls, South Dakota, USA. Prepared by Hai Yang and Meng Qiang, Hong Kong University of Science and Technology.}
	\label{fig:SF}
	\vspace{-0.2cm}
\end{figure}
The computation of the set of matrices $D^{\cC,s}$ took \unit[5]{h} on a standard laptop computer.
Then, for each simulation, running Algorithm~\ref{alg:one} and solving Problem~\ref{prob:rides} took overall \unit[1]{min} approximately. Fig.~\ref{fig:4k} shows the number of travel requests that were ride-pooled, and the average delay experienced $\delta$---for which $\bar{\delta}$ is the upper bound---as a function of the number of overall travel requests, the maximum waiting time $\bar{t}$, and maximum delay $\bar{\delta}$.
\begin{figure}[t]
	\centering
	\includegraphics[trim={0cm 0 15 0},clip,width=\linewidth]{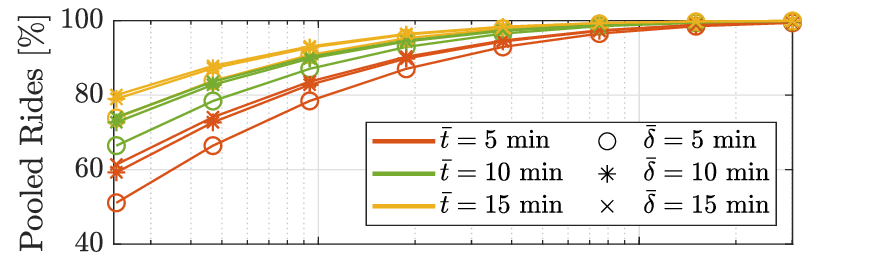}
	\includegraphics[trim={0cm 0 15 0},clip,width=\linewidth]{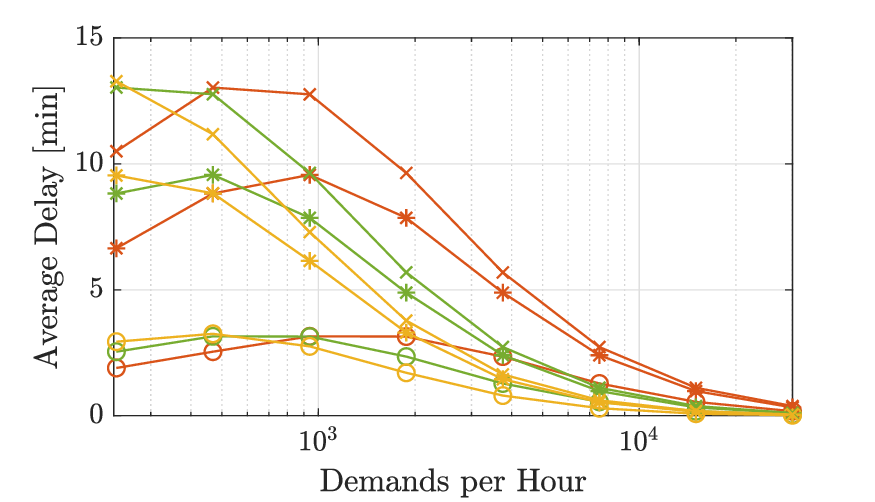}
	\caption{Ride-pooling with $k \leq 4$ in Sioux Falls. Percentage of pooled rides and average experienced delay as a function of the overall number of hourly demands, waiting time $\bar{t}$, and maximum delay $\bar{\delta}$. Note that the x-axis is logarithmic. }
	\label{fig:4k}
	\vspace{-0.1cm}
\end{figure}
Interestingly, the average delay is not always monotone decrescent. It is the case for a short waiting time $\bar{t}$, that is counter-balanced by a longer delay. On the left of the bottom figure, the delay is low because not many requests are pooled together. As the demand increases, the percentage of pooled rides increases as well, and as a consequence, the delay follows. However, for larger demands, the delay decreases, as predicted by the Mohring and Better Matching effects~\cite{FielbaumTirachiniEtAl2021}: the higher the number of travel requests in the system, the lower the delay experienced. Conversely, the fewer people use a mobility service, the poorer the performance of the system, reflecting in a lower percentage of requests that can be effectively ride-pooled. 
Fig.~\ref{fig:comp} shows the composition of rides in terms of number of people that can be ride-pooled together as a function of the number of demands and maximum waiting time.
\begin{figure}[t]
	\centering
	\includegraphics[trim={30 0 50 0},clip,width=\linewidth]{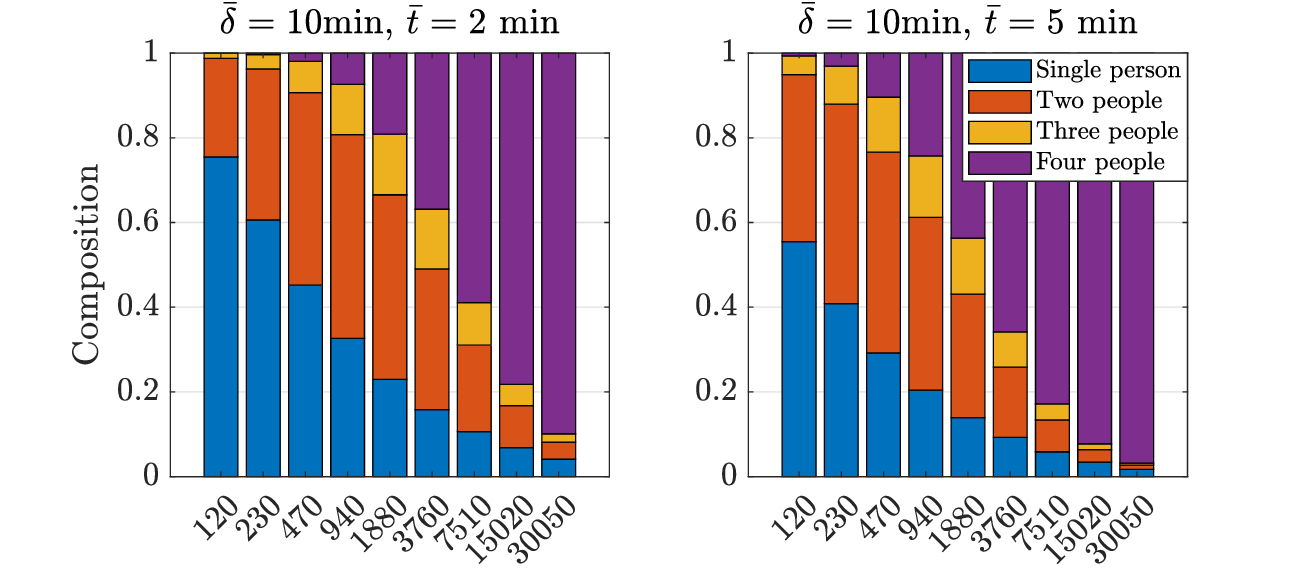}
	\includegraphics[trim={30 0 50 0},clip,width=\linewidth]{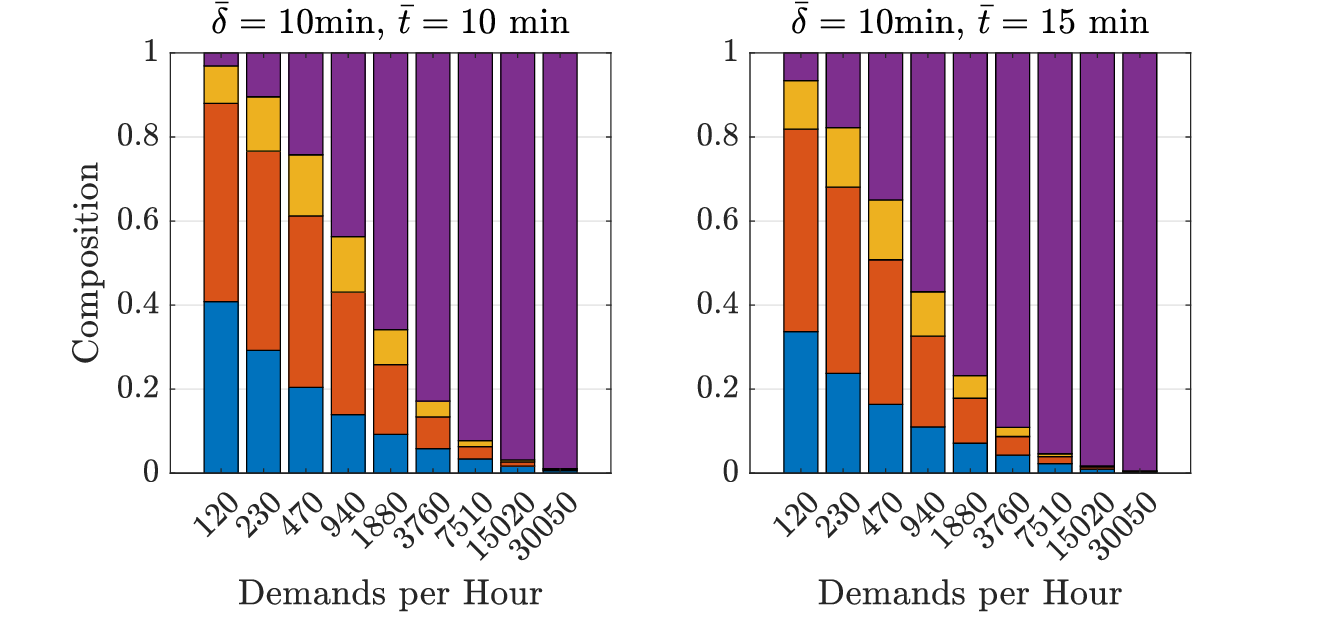}
	\caption{Composition of rides as a function of number of demands per hour, for a maximum experienced delay of $\bar{\delta }=\unit[10]{min}$ and a varying waiting time $\bar{t} = \unit[2,5,10,15]{min}$ in Sioux Falls.}
	\label{fig:comp}
	\vspace{-0.1cm}
\end{figure}
We highlight that, for low demands and short waiting times, the majority of the ride-pooling is done with two people, while three people is marginal and four people is negligible. On the contrary, towards larger number of demands, three and four people ride-pooling become more dominant. Specifically, above $15000$ hourly demands, four people ride-pooling counts for more than 85\%, even with a low waiting time allowed. 

\subsection{Manhattan}\label{sec:nyc}
In this section, we pursue a similar case study for Manhattan, NYC, USA, with data obtained from~\cite{HaklayWeber2008}. The travel requests are gathered from the New York Taxi and Limousine Commission, which includes one hour data of 53000 travel requests during peak-hour of March 2011, as in~\cite{Santi2014}.
In Fig.~\ref{fig:surf}, we compare the relative improvement in objective of Problem~\ref{prob:rides} w.r.t.\ Problem~\ref{prob:main}, i.e., the improvement of the overall travel time.
\begin{figure}[t]
	\vspace{0.5cm}
	\centering
	\includegraphics[width=\linewidth]{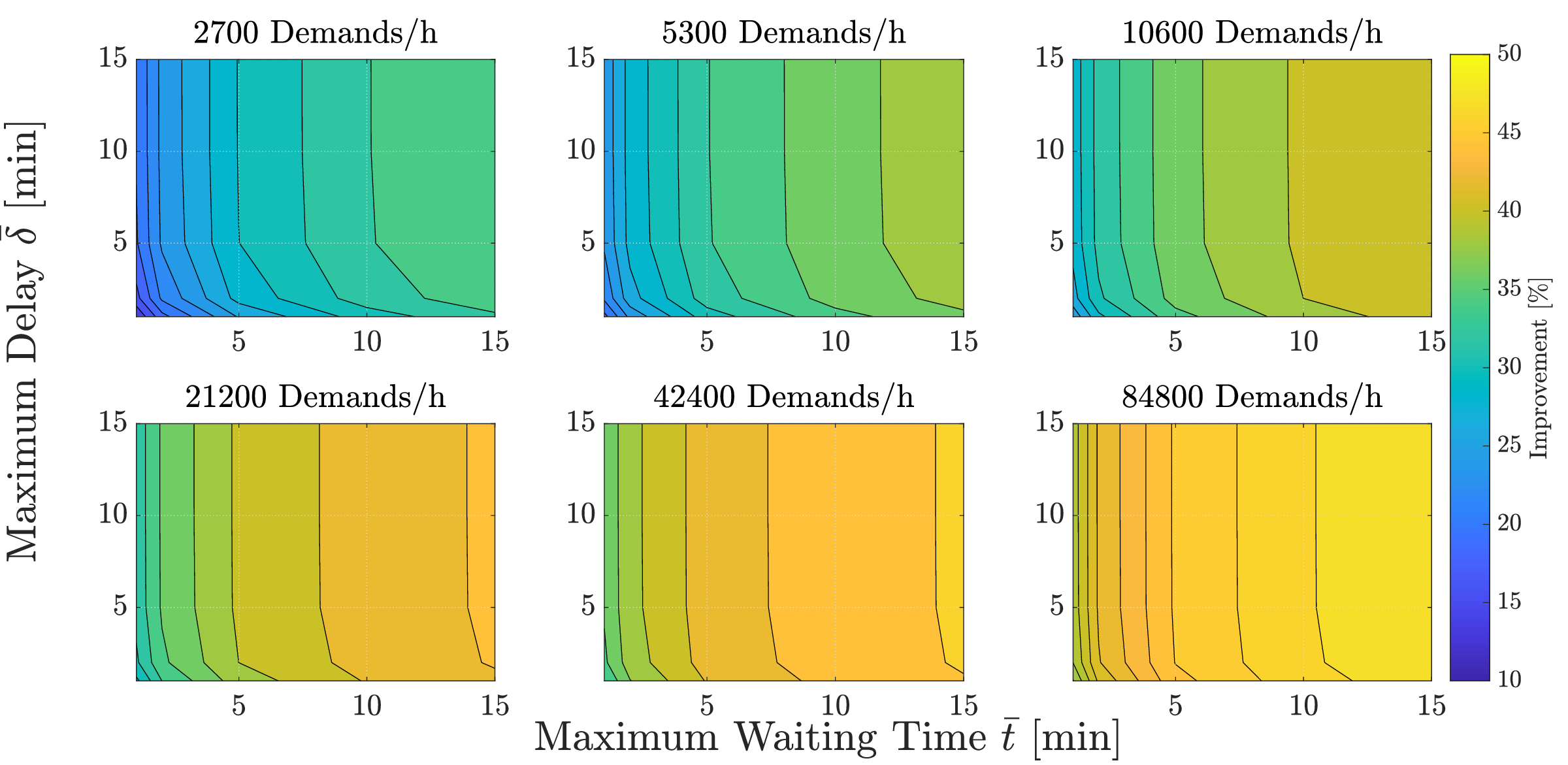}
	\caption{Improvement of the objective function of Problem~\ref{prob:rides} with $\rho=1$, i.e., vehicle travel time, w.r.t.\ ride-sharing, as a function of maximum waiting time $\bar{t}$, maximum delay $\bar{\delta}$, and demand intensity in Manhattan.}
\label{fig:surf}
\end{figure}
Fig.~\ref{fig:surf} shows that ride-pooling always contributes to lowering the overall travel time. In particular, the larger the number of hourly demands, the larger the relative improvement. 
The reason is that the probability function in~\eqref{eq:lem} is monotonically increasing w.r.t.\ $\alpha_i, i\in \mathcal{C}$, i.e., the magnitude of the demands of the requests in $\mathcal{C}$.
In Fig.~\ref{fig:WD} we note that the percentage of rides that are pooled is strongly influenced by the number of demands, represented in logarithmic scale, to a lower extent by the maximum waiting time, and marginally by the maximum delay.
\begin{figure}[t]
	\vspace{0.5cm}
	\centering
	\includegraphics[trim={0cm 0 35 5},clip,width=\linewidth]{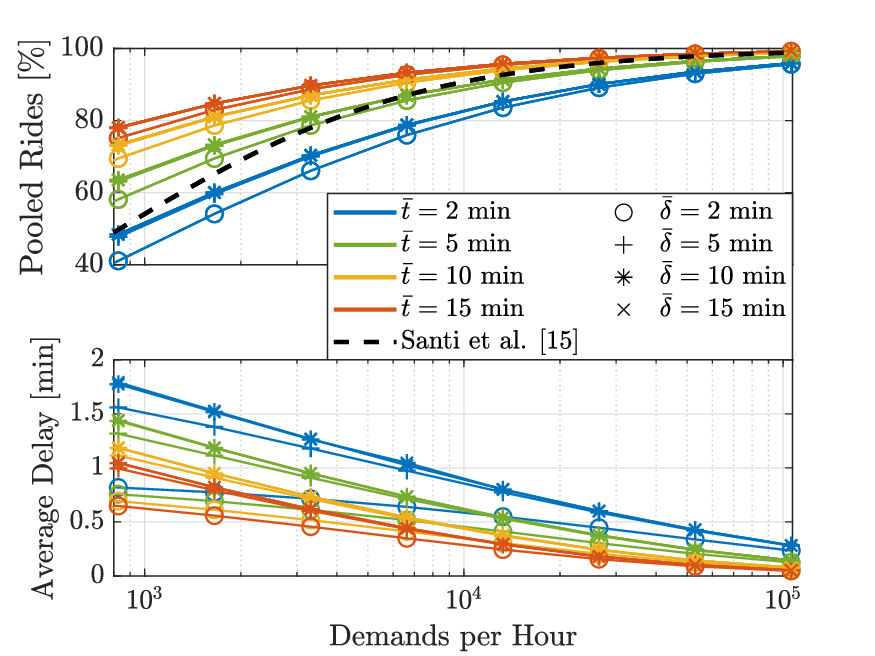}
	\caption{Percentage of pooled rides with comparison with the microscopic approach in \cite{Santi2014}, and average experienced delay as a function of the overall number of hourly demands, waiting time, and maximum delay.}
	\label{fig:WD}
	\vspace{0.5cm}
\end{figure}
The low sensitivity to the delay might be caused by the very short average distance traveled, which is \unit[2.4]{km}. 
In fact, it is always more convenient to increase waiting time rather than delay.  In addition, 
for large demands, both the waiting time and the delay have a minor impact on the percentage of rides being pooled and on the relative improvement. 
Moreover, for the simulations performed, $t^\top x^\mathrm{r}$, i.e., the rebalancing time, accounts for less than $7\%$ of the overall travel time for every scenario studied.  The original rebalancing time for no ride-pooling is roughly $8\%$. Thus, not only does ride-pooling decrease the overall rebalancing time due to the lower number of trips, but also does not lead to a relative increase compared to the overall travel time. This supports the hypothesis of Approximation~\ref{approx}. Last, we notice that above 20,000 hourly requests, by setting both a maximum delay and waiting time of 1 minute, it is possible to ride-pool $80\%$ of the requests. Notably, we obtain very similar qualitative and quantitative results compared to the microscopic model in~\cite{Santi2014}, which are also depicted in Fig.~\ref{fig:WD}.

\subsection{Network Granularity}\label{sec:gran}
In Section~\ref{sec:model}, we showed that the computational complexity of the method presented in this work is independent on the number of requests in the system, whilst it depends on the number of nodes in the road network. Given a road network $G=(\cV,\cA)$, the computation of $D^{\mathcal{C},\star}$ has to be executed only once, and the process is highly parallelizable.
However, it might be convenient to have a more tractable model, not only because of the costly computation, but also to allow for easy integration with other works that implement linear time-invariant traffic flow models~\cite{SalazarLanzettiEtAl2019,PaparellaChauhanEtAl2023}. In this section we will analyze the impact of network granularity, i.e., the number of nodes $\abs{\cV}$, on the quality of the solution. 
Given the initial road network $\cG$, we leverage a k-mean clustering method~\cite{MacQueen1967}, and obtain a pruned version, as shown in Fig.~\ref{fig:gran}. Then, we investigate the quality of the solution of the new network w.r.t.\ the original one. 
\begin{figure}[t]
	\centering
	\includegraphics[trim={1.2cm 34.6 20 5},clip,width=1\linewidth]{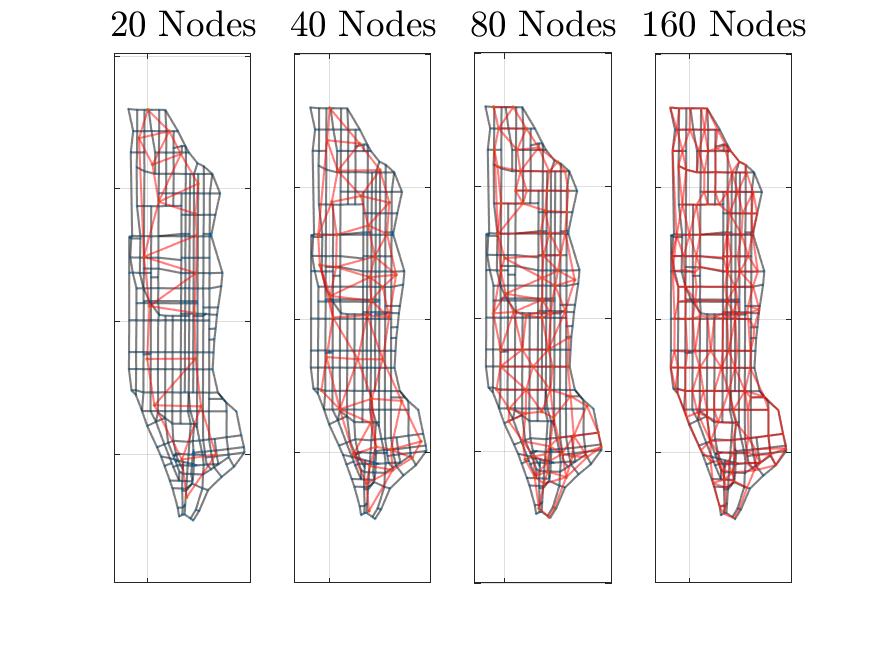}
	\caption{Pruned road networks with $\abs{\cV} = 20,40,80,160$ nodes. Networks obtained with k-mean clustering method. The length of each arc is computed as the $L_2$ norm between the two adjacent nodes. The traverse time is computed accordingly.}
	\label{fig:gran}
\end{figure}
Fig.~\ref{fig:improv} shows the relative improvement of the objective function of Problem~\ref{prob:rides} w.r.t.\ the base case (no ride-pooling) as a function on the granularity of the road network, i.e., $\abs{\cV}$.
\begin{figure}[t]
	\centering
	\includegraphics[trim={10 10 25
		10},clip,width=1\linewidth]{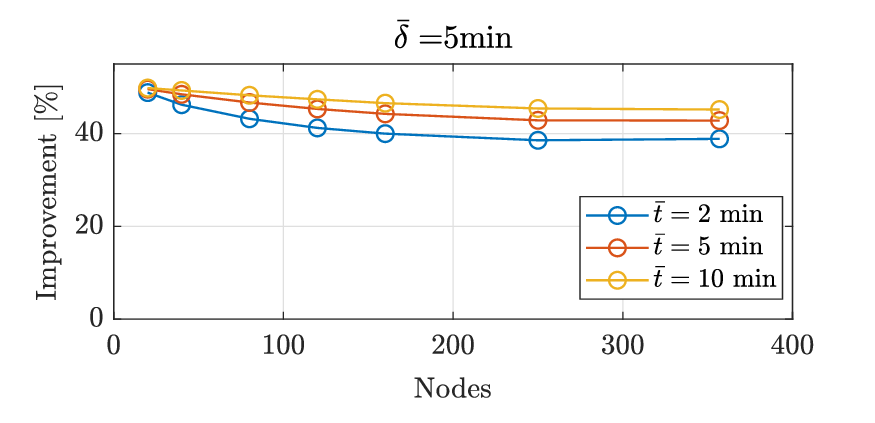}
	\caption{Improvement of the objective function of Problem~\ref{prob:rides} for the ride-pooling case w.r.t. no ride-pooling as a function of the number of nodes in the transportation network. The maximum set delay is $\bar{\delta} = \unit[5]{min}$.}
	\label{fig:improv}
\end{figure}
Naturally, the more the pruning of the road network, the more the quality of the solutions degrades, especially in cases where the new weights of the pruned edges are comparable with the maximum delay $\bar{\delta}$.  Yet, with a light pruning of the network, the solution is close to the original one. In particular, Fig.~\ref{fig:improv} shows that even by reducing $\abs{\cV}$ from $357$ to $120$-$160$, the quality of the solution is still acceptable. The computation of $D^{\cC,s}$ for $K=2$ decreases accordingly, from $\unit[90]{min}$ for the original case, down to $\unit[11]{min}$ for $\abs{\cV} = 120$.
Finally, we recall that this computation, given a network, has to be performed only once, while Algorithm~\ref{alg:one} has to be run for every case study. 
\section{Conclusions}\label{sec:conc}
\color{black}
In this paper we presented a ride-pooling framework leveraging time-invariant network flow models. In particular, we do it by devising an equivalent set of requests, given a ride-pooling assignment, so that the original structure of the network flow problem remains the same.
This allows to still obtain a linear problem which can be solved in polynomial-time.
Additionally, we proposed a methodology to compute a ride-pooling request assignment, that is optimal w.r.t.\ the minimum user travel time problem. Our case studies quantitatively showed that the overall number of requests per unit time is a crucial factor to assess the benefit of ride-pooling in Mobility-on-Demand systems.
Specifically, for an increasing number of requests, an average improvement in the vehicle hours traveled of up to 45\% can be achieved.  We also showed that, for a large number of requests, more than 80\% of them could be pooled with a reasonably short waiting and delay time. Last, in dense urban environments, the majority of rides can be carried out with four people.

This work allows to include ride-pooling in research that leverages  multi-commodity network flow models. In the future, we would like to analyze congestion-aware ride-pooling, and study the interaction that ride-pooling systems have with intermodal ones.

\section*{Statement of Code Availability}
A MATLAB implementation of the methods presented is available in an open-source repository at {\small \url{https://github.com/fabiopaparella/LTI-pooling-K_people}}.

\section*{Acknowledgments}\label{Sec:akn}
We thank Dr. I. New, F. Vehlhaber, and J. van Kampen for proofreading thi paper. 

\bibliographystyle{IEEEtran}
\bibliography{main.bib,SML_papers.bib}

\appendices


\section{Computation of $D^{\mathcal{C},\star}, \;\forall \mathcal{C} \in \bigcup_{k = 1}^K C^{\bar{\delta}}_k(\mathcal{M})$ }\label{sec:comp_complex}

In this section, we analyze the computational complexity of the computation of $D^{\mathcal{C},\star}$ for all $\mathcal{C} \in \bigcup_{k = 1}^K C^{\bar{\delta}}_k(\mathcal{M})$. First, note that solving Problem~\ref{prob:rides}, under Approximation~\ref{approx},  with a simplified demand matrix  $D^\mathrm{rp} = D^{\mathcal{C},s}$ is essentially a shortest path problem between the nodes in the sequence $s\in\mathcal{S}_\mathcal{C}$. Instead of solving the shortest path problem for every bag of requests and for every valid sequence, it can be computed beforehand between every pair of nodes in the network. Since each instance has a worst-case computational complexity of $\mathcal{O} (\abs{\cV}^2)$, the overall computational complexity of all shortest path problems is $\mathcal{O} (\abs{\cV}^4)$. Second, consider a bag of requests $\mathcal{C} \in C_k^{\bar{\delta}}(\mathcal{M})$. For each sequence $s\in\mathcal{S}_\mathcal{C}$, leveraging the precomputation of the shortest path problem between every pair of nodes in the network, one has to compute $\delta^{\mathcal{C},s}_m, \; {\forall m\in \mathcal{C}}$ and $\tilde{J}(X^{\mathcal{C},s})$. The computation of $\delta^{\mathcal{C},s}_m, \; \forall m\in \mathcal{C}$ is the bottleneck, which involves $k$ matrix-vector multiplications of dimension $|\mathcal{V}|$, each with a computational complexity of $\mathcal{O}(|\mathcal{V}|^2)$. Thus, for each sequence $s\in\mathcal{S}_\mathcal{C}$, one has to perform floating-point operations with a computational complexity of $\mathcal{O}(k|\mathcal{V}|^2)$. Third, we proceed to upper bound the number of different valid sequences $s \in \mathcal{S}_\mathcal{C}$ with $\mathcal{C} \in C_k^{\bar{\delta}}(\mathcal{M})$. Consider a bag of requests $\mathcal{C} \in C_k^{\bar{\delta}}(\mathcal{M})$, then valid sequences $s\in\mathcal{S}_\mathcal{C}$ are permutations of the origin and destination nodes of $k$ requests such that i)~for each request the corresponding origin is visited before the destination; and ii)~no edge carries a null flow. Henceforth, consider that the requests in $\mathcal{C}$ are unique, otherwise the number of unique sequences is smaller.  Disregarding the two constraints above, there exist $(2k)!$ permutations of the $k$ origin and $k$ destination nodes of the requests in $\mathcal{C}$. Thus, $(2k)!$ can be used as an upper bound of $|\mathcal{S}_\mathcal{C}|$. It follows that, for each $\mathcal{C} \in C_k^{\bar{\delta}}(\mathcal{M})$, floating point operations with a complexity of $\mathcal{O}((2k!)k|\mathcal{V}|^2)$ must be performed. Fourth, the number of feasible request bags with $k$ elements, i.e., $|C_k^{\bar{\delta}}(\mathcal{M})|$, is upper bounded by the number of request bags with $k$ elements, i.e., $|C_k(\mathcal{M})|$, which amounts to the combinations of size $k$ of $M$ elements with repetition. In turn, it can be upper bounded by the number of permutations of size $k$ of $M$ elements with repetition, which is $M^k$. Therefore, the number of sequences that need to be checked to compute every feasible $D^{\mathcal{C},\star}$  is upper bounded by $(2k)!M^k$.  It follows that the computation of $D^{\mathcal{C},\star}$ for all $\mathcal{C} \in C_k^{\bar{\delta}}(\mathcal{M})$ has a computational complexity of $\mathcal{O}(M^k(2k!)k|\mathcal{V}|^2)$.  On top of that, given that the number of requests, $M$, grows with $\mathcal{O}(|\mathcal{V}^2|)$, then one can write that the computation of $D^{\mathcal{C},\star}$ for all $\mathcal{C} \in C_k^{\bar{\delta}}(\mathcal{M})$ has a computational complexity of $\mathcal{O}((2k!)k|\mathcal{V}|^{2k+2})$. The overall computational complexity of the computation of $D^{\mathcal{C},\star}$ for all $\mathcal{C} \in \bigcup_{k = 1}^K C^{\bar{\delta}}_k(\mathcal{M})$ grows with $\mathcal{O}((2K!)K |\mathcal{V}|^{2K+2})$ with $K$ and $|\mathcal{V}|$, since the complexity of the bags $\mathcal{C} \in C_K^{\bar{\delta}}(\mathcal{M})$ dominate over the others. To conclude, the complexity of the precomputation of the shortest path problem between every pair of nodes in the network is dominated and the overall computational complexity grows with $\mathcal{O}((2K!)K |\mathcal{V}|^{2K+2})$ with $K$ and $|\mathcal{V}|$.

\section{Proofs}

%

\subsection{Proof of Lemma~\ref{lemma:finalprob}}\label{app:1}

The time until the occurrence of a request follows an exponential distribution. Denote the distribution of the time until an occurrence of a request $i$ as $E_{i} \sim \mathrm{Exp}\left(\alpha_{i}\right) \forall i \in\{1, \ldots, k\}$. Moreover, consider the maximum and minimum distributions of these, i.e., $X=\max \left\{E_{1}, \ldots, E_{k}\right\}$ and $Y=\min \left\{E_{1}, \ldots, E_{k}\right\}$.
The goal is to find the probability of all requests occurring within a maximum time interval $\bar{t}$, or equivalently the probability of the difference between the maximum and minimum bring less than or equal to $\bar{t}$. To that purpose, define $Z=X-Y$, whose cumulative density function $F_{Z}(t)$ allows to obtain the desired result as
\begin{equation}\label{eq:p_t_bar_extended}
	\begin{aligned}
		P_{\bar{t}}\left(\alpha_{1}, \ldots, \alpha_{k}\right) & =F_{Z}(\bar{t})=P(Z \leqslant \bar{t}) \\
		& =\int_{0}^{+\infty} \int_{0}^{y+\bar{t}} f_{XY}(x, y) \mathrm{d}x \mathrm{d}y,\\
	\end{aligned}
\end{equation}
where $f_{XY}(x, y)$ denotes the joint probability density function of $X$ and $Y$. In what follows, we are going to compute  an explicit expression for $f_{XY}(x, y)$. The joint cumulative density function of $x, y$ is denoted by $F_{XY}(x, y)$, i.e., $F_{XY}(x, y)=P(Y \leqslant y, X \leqslant x)$.
Note that $Y\leq y$ and $X \leq x$ if all the realizations are less than or equal to $x$ and at least one is less than or equal to $y$. Thus, 
\begin{equation*}
P(Y\! \leqslant \! y, X \!\leqslant \! x)= P\Bigg(\underbrace{\!\!\!\left(\bigcup_{i=1}^{k}\!\left\{E_{i} \leqslant y\right\}\right)}_{\substack{\text{at least one less}\\\text{than or equal to } y}} \! \cap \! \underbrace{\left(\bigcap_{i=1}^{k}\!\left\{E_{i} \leqslant x\right\}\right)}_{\substack{\text {all less than } \\ \text{or equal to }x}}  \!\!\!\Bigg)\!.
\end{equation*}
For any two sets $A, B$ it follows that $A \cap B=A \backslash \bar{B}$ and $P(A \backslash \bar{B})=P(A)-P(A \cap \bar{B})$, where $\bar{B}$ denotes the complement of set $B$. Making the correspondence $P(Y \leq y, X \leqslant x) =P(A \cap B)$ with $A=\bigcap_{i=1}^{k}\left\{E_{i} \leqslant x\right\}$ and $B=\bigcup_{i=1}^{k}\left\{E_{i} \leqslant y\right\}$, it follows that 
\begin{equation}\label{eq:explicit_FXY}
	\begin{aligned}
		F_{XY}(x, y) & =\prod_{i=1}^{k} P\left(E_{i} \leq x\right)-\prod_{i=1}^{k} P\left(y<E_{i} \leqslant x\right) \\
		& =\prod_{i=1}^{k} F_{E_{i}}(x)-\prod_{i=1}^{k}\left(F_{E_{i}}(x)-F_{E_{i}}(y)\right), 		
	\end{aligned}
\end{equation}
where the independence of $E_{1}, \ldots, E_{k}$ was employed. Since, $f_{XY}(x, y)=\frac{\partial^{2}}{\partial x \partial y} F_{XY}(x, y)$ and $f_{XY}(x, y)=\frac{\partial^{2}}{\partial x \partial y} F_{XY}(x, y)=0$ for $x <x y$, one may write \eqref{eq:p_t_bar_extended} as
\begin{equation}\label{eq:Fz}
		F_{Z}(\bar{t}) =\int_{0}^{+\infty}\left[\frac{\partial F_{XY}(x, y)}{\partial y}\right]_{x=y}^{x=y+\bar{t}} \mathrm{d} y.
\end{equation}
Since, $E_1,\ldots,E_k$ are exponential distributions, the cumulative functions are given by $F_{E_i}(x) = 1-e^{-\alpha_i x}, \; x\geq 0$. Thus, for $x>y$, from \eqref{eq:explicit_FXY} we obtain
\begin{equation*}
	\frac{\partial F_{XY}(x, y)}{\partial y}=\sum_{i=1}^{k}\left(\alpha_{i} e^{-\alpha_{i} y} \prod_{\substack{j=1 \\ j \neq i}}^{k}\left(e^{-\alpha_{j} y}-e^{-\alpha_{j} x}\right)\right)
\end{equation*}
and, after some algebraic manipulation, 
\begin{equation}\label{eq:dF_expanded}
\left[\frac{\partial F_{XY}(x, y)}{\partial y}\right]_{x=y}^{x=y+\bar{t}}  \!\!\!\!= \!  \sum_{i=1}^{k}\!\left( \! \alpha_{i} e^{-\sum_{j=1}^{k} \alpha_{j} y} \prod_{\substack{j=1 \\	j \neq i}}^{k}\!\!\left(1\!-\!e^{-\alpha_{j} \bar{t}}\right)\right).
\end{equation}
Thus, from \eqref{eq:Fz} and \eqref{eq:dF_expanded} it follows that
\begin{equation*}
	F_{Z}(\bar{t})  =\sum_{i=1}^{k}\left(\frac{\alpha_{i}}{\sum_{j=1}^{k} \alpha_{j}} \prod_{j=1}^{k}\left(1-e^{-\alpha_{j} \bar{t}}\right)\right).
\end{equation*}

\qed

\subsection{Proof of Theorem~\ref{theorem:one}}\label{app:2}


First, we address the bound on the iterations required for the convergence of Algorithm~\ref{alg:one}. If for each bag $\mathcal{C}$ chosen in each iteration we set  $\Delta \tilde{J}_{\mathcal{C}}  = 0$, then bag $\mathcal{C}$ will not be chosen again. Since $\abs{C_k^{\bar{\delta}}} \leq |\mathcal{M}|^k$, then  $\abs{\bigcup_{k = 1}^K C^{\bar{\delta}}_k(\mathcal{M})}\leq \abs{\cM}(\abs{\cM}^K-1)/(\abs{\cM}-1)$. The optimality of the vehicle flows $\gamma_\mathcal{C}$ is carried out making use of an analogy with the continuous Knapsack problem,  which can be solved by a polynomial-time greedy algorithm~\cite{Dantzig1957}. The algorithm consists allocating, at every iteration, the maximum amount of the resource with the highest improvement in the objective function per unit of  the resource. Similarly to the continuous Knapsack problem, the goal is to minimize $\tilde{J}(X^\star_\gamma)$ by allocating user flow $|\mathcal{C}|{\gamma_{\mathcal{C}} \geq 0}$ with $\mathcal{C} \in \bigcup_{k = 1}^K C^{\bar{\delta}}_k(\mathcal{M})$. In this setting, the vehicle flow ${\gamma_{\mathcal{C}} \geq 0}$ corresponds to a  ride-pooled user flow of $|\mathcal{C}|{\gamma_{\mathcal{C}} \geq 0}$, which is upper bounded as $\sum_{\mathcal{C} \in \bigcup_{k = 1}^K C^{\bar{\delta}}_k(\mathcal{M})} |\mathcal{C}|\gamma_{\mathcal{C}} \leq \sum_{i\in \mathcal{M}} \alpha_{i}$. First, borrowing the notation from Section~\ref{sec:SD}, if $\gamma_{\mathcal{C}}$ is assigned to bag $\mathcal{C}$, then the corresponding decrease in the cost function amounts to $\tilde{J}(\gamma_{\mathcal{C}}X^{\mathcal{C},0})\! -\!\tilde{J}(\gamma_{\mathcal{C}}X^{\mathcal{C},\star}) =\gamma_{\mathcal{C}}(\tilde{J}(X^{\mathcal{C},0})\! -\!\tilde{J}(X^{\mathcal{C},\star})) = \gamma_{\mathcal{C}}\Delta{\tilde{J}}_{\mathcal{C}}$, where the linearity of $\tilde{J}(X)$ w.r.t.\ $X$ played a key role. Thus, the allocation of $|\mathcal{C}|\gamma_{\mathcal{C}}$ leads to an improvement on the cost that amounts to $\Delta{\tilde{J}}_{\mathcal{C}}/|\mathcal{C}|$ per unit of user flow. Note that this relative improvement does not depend on the allocation of other bags, thus, since the user flow is upper bounded, the optimal solution is to allocate the maximum user flow to the bags that allow for the highest relative improvement w.r.t.\ the user flow. Second, the maximum user flow for $\mathcal{C}$ corresponds to the maximum $\gamma_{\mathcal{C}}$ that can be allocated. As pointed out in Section~\ref{sec:alg}, throughout the algorithm, $\alpha^\prime_m$ corresponds to the demand of $r_m$ which has not yet been ride-pooled with another request. Thus, the value of $\gamma_{\mathcal{C}}$ that can be allocated has an upper bound given by $\gamma_{\mathcal{C}} \leq \min\left(\alpha^\prime_m/m_{\mathcal{C}}(m), m\in \mathcal{C}\right) P_{\bar{t}}(\alpha_m^\prime, m\in \mathcal{C}) $. Note that Algorithm~\ref{alg:one} corresponds to allocating the maximum amount of $\gamma_{\mathcal{C}}$, where $\mathcal{C}$ is such that, at each iteration, the highest positive relative improvement in the objective function w.r.t.\ the user flow is achieved, i.e., $ \mathcal{C} \in \mathrm{argmax}_{\mathcal{C} \in \bigcup_{k = 1}^K  C^{\bar{\delta}}_k(\mathcal{M})} (\Delta \tilde{J}_{\mathcal{C}}/|\mathcal{C}|)$, which shows its optimality. \qed

\begin{IEEEbiography}[{\includegraphics[width=1in,height=1.25in,clip,keepaspectratio]{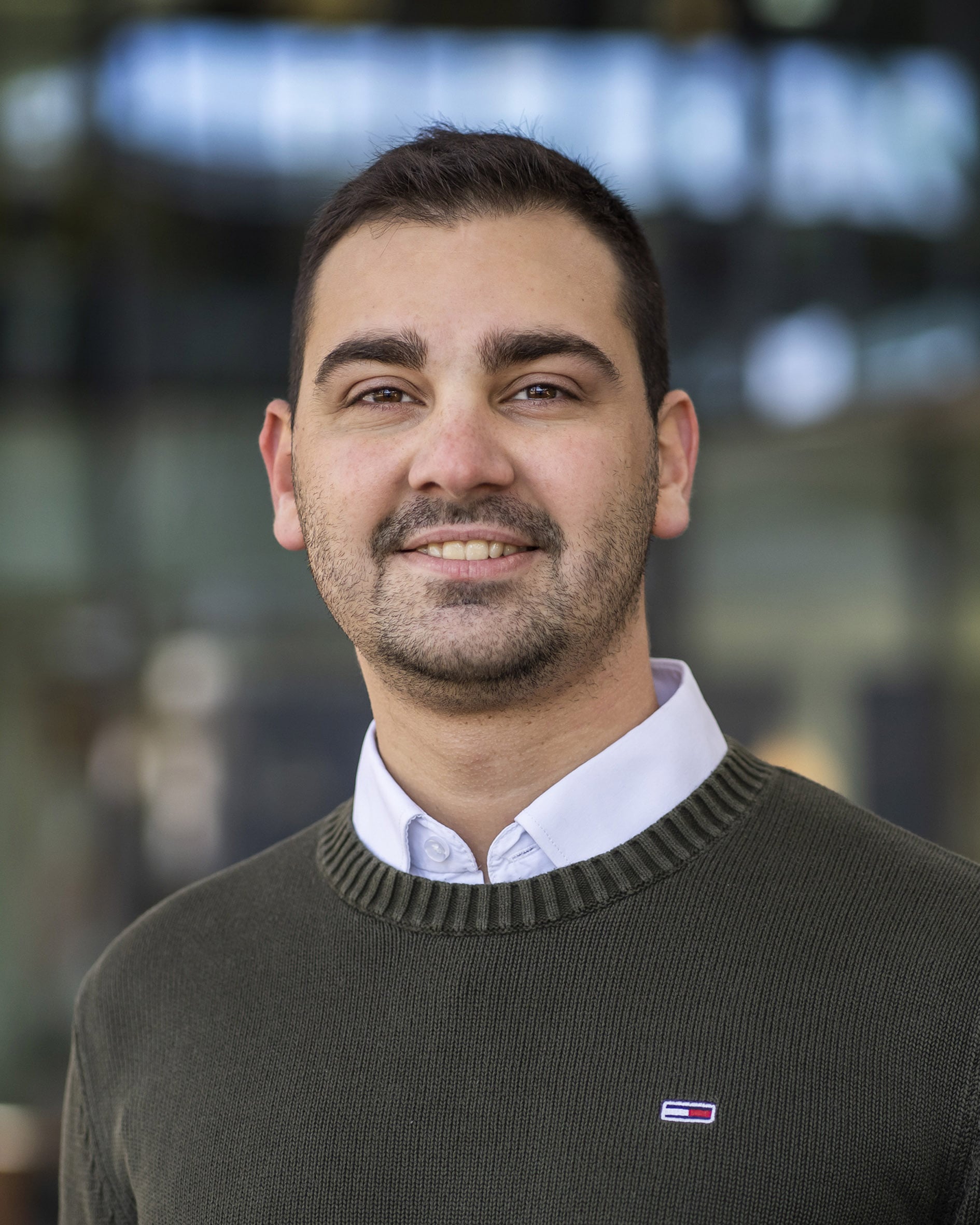}}]{Fabio Paparella} is a Ph.D. student in the Control Systems Technology (CST) section at Eindhoven University of Technology, The Netherlands. He studied mechanical engineering at Politecnico di Milano, Italy, where he received his Bachelor's degree in 2017 and his Master's cum laude in 2020 with a thesis in collaboration with NASA Jet Propulsion Laboratory, California, USA. His research interests include mobility-on-demand, smart mobility, and optimization.
\end{IEEEbiography}
\begin{IEEEbiography}[{\includegraphics[width=1in,height=1.25in,clip,keepaspectratio]{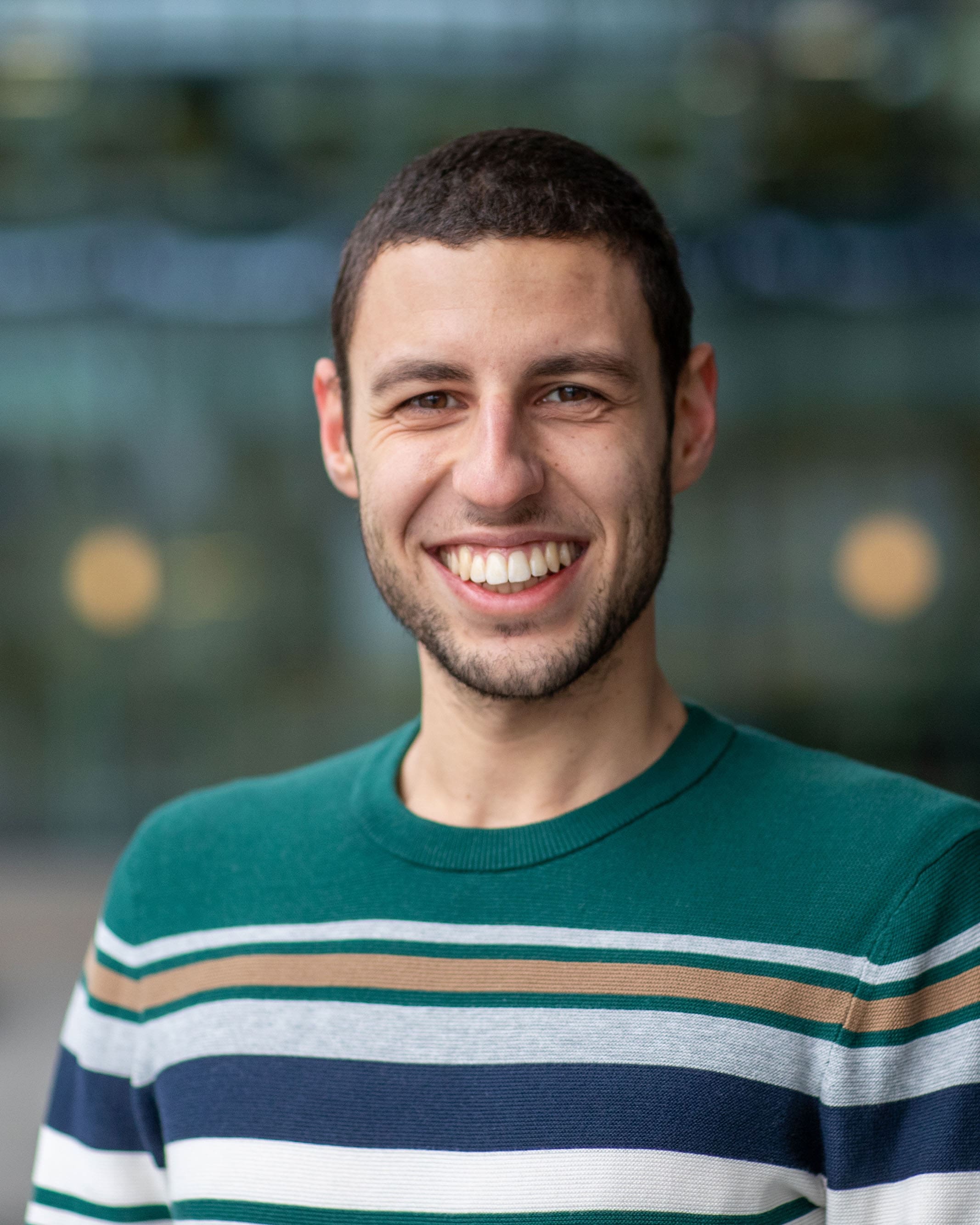}}]{Leonardo Pedroso} obtained the M.Sc. in Aerospace Engineering in 2022 from Instituto Superior Técnico, University of Lisbon (ULisboa), Portugal. He has received four outstanding academic performance awards, endowed by ULisboa. From 2019 to 2022, he was a member of a research team funded by the Portuguese National Science Foundation. Since 2023, he has been a Ph.D. Candidate in the Control Systems Technology section at Eindhoven University of Technology, The Netherlands. His research interests include fair and sustainable mobility, aggregative game theory, and decentralized control and estimation over very large-scale networks.
\end{IEEEbiography}
\begin{IEEEbiography}[{\includegraphics[width=1in,height=1.25in,clip,keepaspectratio]{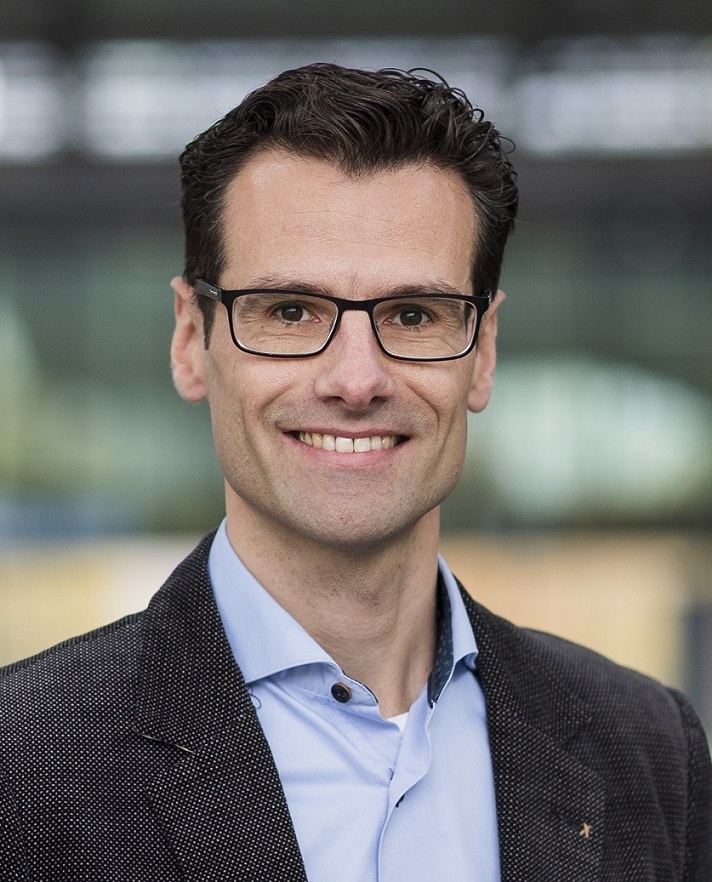}}]{Theo Hofman} was born in Utrecht, The Netherlands, in 1976. He received his M.Sc. (with honors) in 1999 and Ph.D. degree in 2007, both in Mechanical Engineering from Eindhoven University of Technology, Eindhoven. From 1999 to 2003, he was a researcher and project manager with the R\&D Department of Thales Cryogenics B.V., Eindhoven, The Netherlands. From 2003 to 2007, he was a scientific researcher at Drivetrain Innovations B.V., Eindhoven. Since 2010, he has been an Associate Professor with the Control Systems Technology group. His research interests are system design optimization methods for complex dynamical engineering systems and discrete topology design using computational design synthesis.
\end{IEEEbiography}
\begin{IEEEbiography}[{\includegraphics[width=1in,height=1.25in,clip,keepaspectratio]{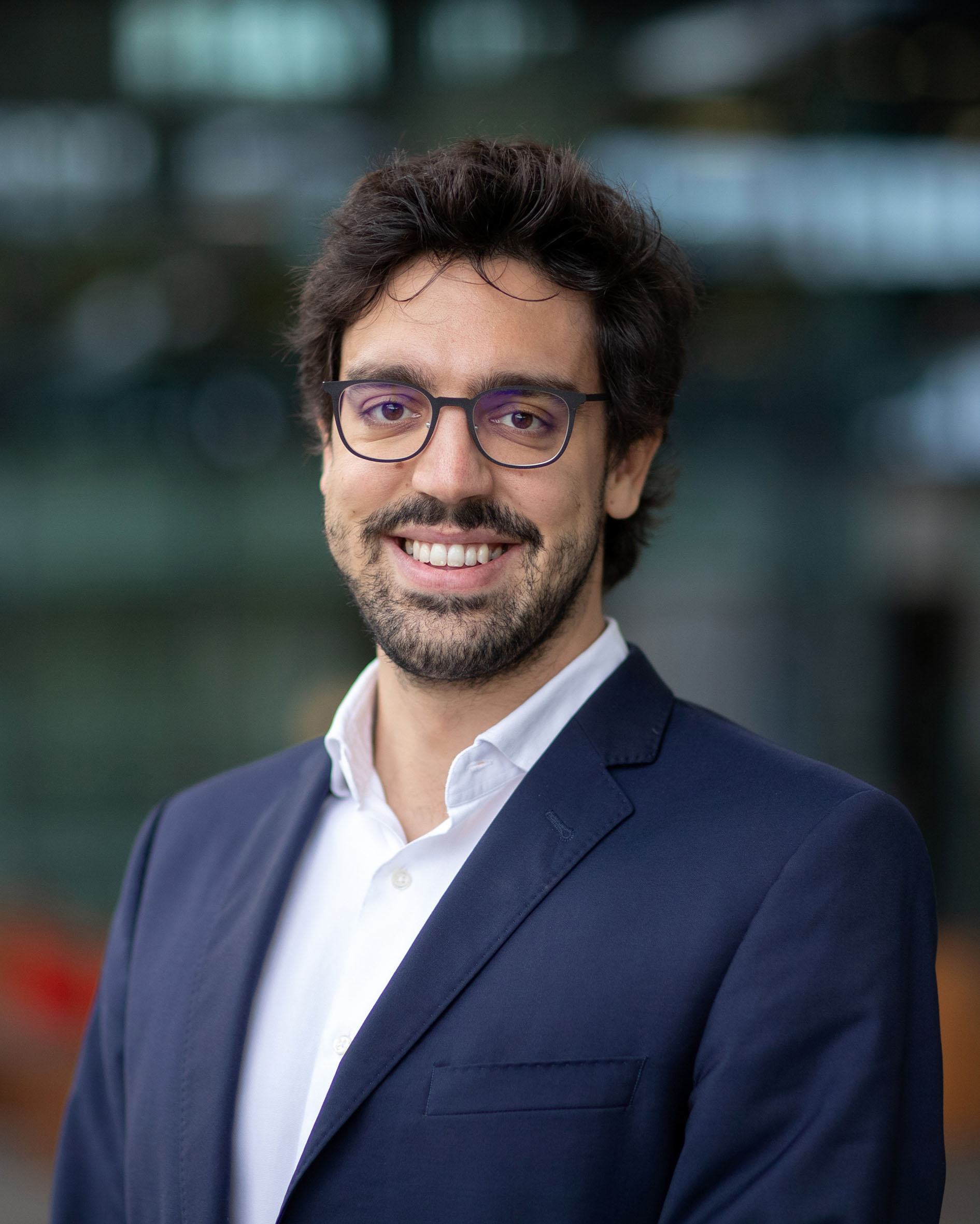}}]{Mauro Salazar} is an Assistant Professor in the Control Systems Technology section at Eindhoven University of Technology (TU/e), and co-affiliated with Eindhoven AI Systems Institute (EAISI). He received the Ph.D. degree in Mechanical Engineering from ETH Zurich in 2019. Before joining TU/e he was a Postdoctoral Scholar in the Autonomous Systems Lab at Stanford University.
Dr. Salazar’s research is focused on optimization models and methods for cyber-socio-technical systems design and control, with a strong focus on sustainable mobility.Both his Master thesis and PhD thesis were recognized with the ETH Medal, and his papers were granted the Best Student Paper award at the 2018 Intelligent Transportation Systems Conference and at the 2022 European Control Conference.
\end{IEEEbiography}
\vfill

\end{document}